\newcommand{\floor}[1]{\lfloor{#1}\rfloor}
\newcommand{\ceil}[1]{\lceil{#1}\rceil}
\newcommand{\myd}[2]{\lvert #1 #2 \rvert}
\newcommand{\s}[1]{{\lvert #1 \rvert}}
\newcommand{\eps}{\varepsilon}
\DeclareMathOperator{\poly}{poly}
\DeclareMathOperator{\diam}{diam}
\DeclareMathOperator{\interior}{int}
\newcommand{\myB}{\mathcal{B}} 
\newcommand{\myC}{\mathcal{C}} 
\newcommand{\myD}{\mathcal{D}} 
\newcommand{\myP}{\mathcal{P}} 
\newcommand{\myS}{\mathcal{S}}
\newcommand{\myT}{\mathcal{T}}
\newcommand{\myU}{\mathcal{U}}
\newcommand{\myV}{\mathcal{V}}
\newcommand{\myF}{\mathcal{F}}
\newcommand{\myG}{\mathcal{G}}
\newcommand{\myH}{\mathcal{H}}
\newcommand{\myR}{\mathcal{R}}
\newcommand{\myI}{\mathcal{I}}
\newcommand{\myA}{\mathcal{A}}
\newcommand{\MSTN}{\mathrm{MSFN}}
\newcommand{\MStTN}{\mathrm{MStFN}}
\newcommand{\optstab}{\mathrm{Stab}}
\newcommand{\optrel}{R^{*}}
\newcommand{\intW}{\interior(W\mspace{-1mu})}
\newcommand{\blue}{\text{blue}}
\newcommand{\red}{\text{red}}
\newcommand{\rstabt}{137/60}
\newcommand{\rhub}{\frac{137}{30}}
\newtheorem{theorem}{Theorem}
\newtheorem{lemma}{Lemma}
\newtheorem{corollary}{Corollary}
\theoremstyle{definition}
\theoremstyle{remark}
\newtheorem{remark}{Remark}
\newenvironment{myabstract}
               {\list{}{\listparindent 1.5em%
                        \itemindent    \listparindent
                        \leftmargin    0cm
                        \rightmargin   0cm
                        \parsep        0pt}%
                \item\relax}
               {\endlist}
\newenvironment{mycover}
               {\list{}{\listparindent 0pt
                        \itemindent    \listparindent
                        \leftmargin    0cm
                        \rightmargin   0cm
                        \parsep        0pt}%
                \raggedright
                \item\relax}
               {\endlist}
\newcommand{\myauthor}[1]{#1\par\smallskip}
\newcommand{\myaff}[1]{{\small #1\par}\bigskip}
\newcommand{\mytitlebreak}{\texorpdfstring{\\}{ }}
\begin{document}

\mbox{}
\begin{mycover}
{\huge \bfseries Improved Approximation Algorithms \\ for Relay Placement \par}

\bigskip
\bigskip
\myauthor{ALON EFRAT}
\myaff{Department of Computer Science, University of Arizona}

\myauthor{S\'ANDOR P.\ FEKETE}
\myaff{Department of Computer Science, Braunschweig University of Technology}

\myauthor{JOSEPH S.\ B.\ MITCHELL}
\myaff{Department of Applied Mathematics and Statistics, Stony Brook University}

\myauthor{VALENTIN POLISHCHUK}
\myaff{Communications and Transport Systems, Link\"oping University
    \par\smallskip
    Helsinki Institute for Information Technology HIIT\\
    Department of Computer Science, University of Helsinki}

\myauthor{JUKKA SUOMELA}
\myaff{Helsinki Institute for Information Technology HIIT, \\
    Department of Computer Science, Aalto University}
\end{mycover}

\bigskip
\begin{myabstract}
\noindent\textbf{Abstract.}
In the relay placement problem the input is a set of sensors and a number $r \ge 1$, the communication range of a relay.  In the \emph{one-tier} version of the problem the objective is to place a minimum number of relays so that between every pair of sensors there is a path \emph{through sensors and/or relays} such that the consecutive vertices of the path are within distance $r$ if both vertices are relays and within distance~1 otherwise.  The \emph{two-tier} version adds the restrictions that the path must go \emph{through relays, and not through sensors}.  We present a $3.11$-approximation algorithm for the one-tier version and a PTAS for the two-tier version.  We also show that the one-tier version admits no PTAS, assuming P${}\ne{}$NP.
\end{myabstract}
\thispagestyle{empty}
\setcounter{page}{0}
\newpage


\section{Introduction}

A sensor network consists of a large number of low-cost autonomous devices, called \emph{sensors}. Communication between the sensors is performed by wireless radio with very limited range, e.g., via the Bluetooth protocol. To make the network connected, a number of additional devices, called \emph{relays}, must be judiciously placed within the sensor field.  Relays are typically more advanced and more expensive than sensors, and, in particular, have a larger communication range.  For instance, in addition to a Bluetooth chip, each relay may be equipped with a WLAN transceiver, enabling communication between distant relays.  The problem we study in this paper is that of placing a \emph{minimum number} of relays to ensure the connectivity of a sensor network.

Two models of communication have been considered in the literature \cite{%
    bredin10deploying,chen00approximations,chen01approximations,%
    cheng08relay,liu06optimal,lloyd07relay,srinivas06mobile,%
    zhang07fault-tolerant%
}.  In both models, a sensor and a relay can communicate if the distance between them is at most~1, and two relays can communicate if the distance between them is at most~$r$, where $r\ge1$ is a given number.  The models differ in whether direct communication between sensors is allowed.  In the \emph{one-tier} model two sensors can communicate if the distance between them is at most~1.  In the \emph{two-tier} model the sensors do not communicate at all, no matter how close they are.  In other words, in the two-tier model the sensors may only link to relays, but not to other sensors.

Formally, the input to the relay placement problem is a set of $n$ sensors, identified with their locations in the plane, and a number $r\ge1$, the communication range of a relay (by scaling, without loss of generality, the communication range of a sensor is~$1$).  The objective in the \emph{one-tier} relay placement is to place a minimum number of relays so that between every pair of sensors there exists a path, \emph{through sensors and/or relays}, such that the consecutive vertices of the path are within distance $r$ if both vertices are relays, and within distance~1 otherwise.  The objective in the \emph{two-tier} relay placement is to place a minimum number of relays so that between every pair of sensors there exists a path \emph{through relays} such that the consecutive vertices of the path are within distance $r$ if both vertices are relays, and within distance~1 if one of the vertices is a sensor and the other is a relay (going directly from a sensor to a sensor is forbidden).

\subsection{Previous Work}

One-tier relay placement in the special case of $r=1$ \cite{bredin10deploying,cheng08relay} is equivalent to finding a Steiner tree with minimum number of Steiner nodes and bounded edge length -- the problem that was studied under the names STP-MSPBEL \cite{lin99steiner}, SMT-MSPBEL \cite{lloyd07relay,zhang07fault-tolerant}, MSPT \cite{mandoiu00note}, and STP-MSP \cite{chen00approximations,chen01approximations,cheng08relay,liu06optimal,srinivas06mobile}.
\citet{lin99steiner} proved that the problem is NP-hard and gave a 5-approximation algorithm.  Chen et al.\ \cite{chen00approximations,chen01approximations} showed that the algorithm of Lin and Xue is actually a 4-approximation algorithm, and gave a 3-approximation algorithm; \citet{cheng08relay} gave a 3-approximation algorithm with an improved running time, and a randomised 2.5-approximation algorithm.  Chen et al.\ \cite{chen00approximations,chen01approximations} presented a polynomial-time approximation scheme (PTAS) for minimising the \emph{total} number of vertices in the tree (i.e., with the objective function being the number of the original points plus the number of Steiner vertices) for a restricted version of the problem, in which in the minimum spanning tree of the set the length of the longest edge is at most constant times the length of the shortest edge.

For the general case of \emph{arbitrary} $r\ge1$, the current best approximation ratio for one-tier relay placement is due to \citet{lloyd07relay}, who presented a simple 7-approximation algorithm, based on ``Steinerising'' the minimum spanning tree of the sensors.  In this paper we give an algorithm with an improved approximation ratio of 3.11.

Two-tiered relay placement (under the assumptions that the sensors are uniformly distributed in a given region and that $r\ge4$) was considered by \citet{hao04fault-tolerant} and \citet{tang06relay} who suggested constant-factor approximation algorithms for several versions of the problem.  \citet[Thm.~4.1]{lloyd07relay} and \citet[Thm.~1]{srinivas06mobile} developed a general framework whereby given an $\alpha$-approximate solution to Disk Cover (finding minimum number of unit disks to cover a given set of points) and a $\beta$-approximate solution to STP-MSPBEL (see above), one may find an approximate solution for the two-tier relay placement.  In more details, the algorithm in \citet{lloyd07relay} works for arbitrary $r\ge1$ and has an approximation factor of $2\alpha+\beta$; the algorithm in \citet{srinivas06mobile} works for $r\ge2$ and guarantees an $(\alpha+\beta)$-approximate solution.  Combined with the best known approximation factors for the Disk Cover \cite{hochbaum85approximation} and STP-MSPBEL \cite{chen00approximations,chen01approximations,cheng08relay}, these lead to $5+\eps$ and $4+\eps$ approximations for the relay placement respectively.  In this paper we present a PTAS for the two-tiered relay placement; the PTAS works directly for the relay placement, without combining solutions to other problems.

A different line of research \cite{misra08constrained,carmi07covering} concentrated on a ``discrete'' version of relay placement, in which the goal is to pick a minimum subset of relays from a \emph{given} set of possible relay locations.  In this paper we allow the relays to reside anywhere in the plane.

\subsection{Contributions}

We present new results on approximability of relay placement:
\begin{itemize}[itemsep=0.5ex]
    \item In Section~\ref{sec_apx1tier} we give a simple $O(n\log n)$-time 6.73-approximation algorithm for the one-tier version.
    \item In Section~\ref{sec_apx1tierim} we present a polynomial-time 3.11-approximation algorithm for the one-tier version.
    \item In Section~\ref{sec_inapx1tier} we show that there is no PTAS for one-tier relay placement (assuming that $r$ is part of the input, and P${}\ne{}$NP).
    \item In Section~\ref{sec_ptas} we give a PTAS for two-tier relay placement.
\end{itemize}
Note that the \emph{number} of relays in a solution may be exponential in the size of the input (number of bits).  Our algorithms produce a succinct representation of the solution.  The representation is given by a set of points and a set of line segments; the relays are placed on each point and equally-spaced along each segment.


\section{Blobs, Clouds, Stabs, Hubs, and Forests}\label{sec_prelim}

In this section we introduce the notions, central to the description of our algorithms for one-tier relay placement.  We also provide lower bounds.

\subsection{Blobs and Clouds}

We write $\myd{x}{y}$ for the Euclidean distance between $x$ and~$y$. Let $V$ be a given set of sensors (points in the plane).  We form a unit disk graph $\myG = (V,E)$ and a disk graph $\myF = (V,F)$ where
\begin{align*}
    E &= \bigl\{ \{u,v\} : \myd{u}{v} \le 1 \bigr\}, \\
    F &= \bigl\{ \{u,v\} : \myd{u}{v} \le 2 \bigr\};
\end{align*}
see Figure~\ref{fig:clouds}a.

\begin{figure}[t]
    \centering
    \scalebox{0.9}{\input{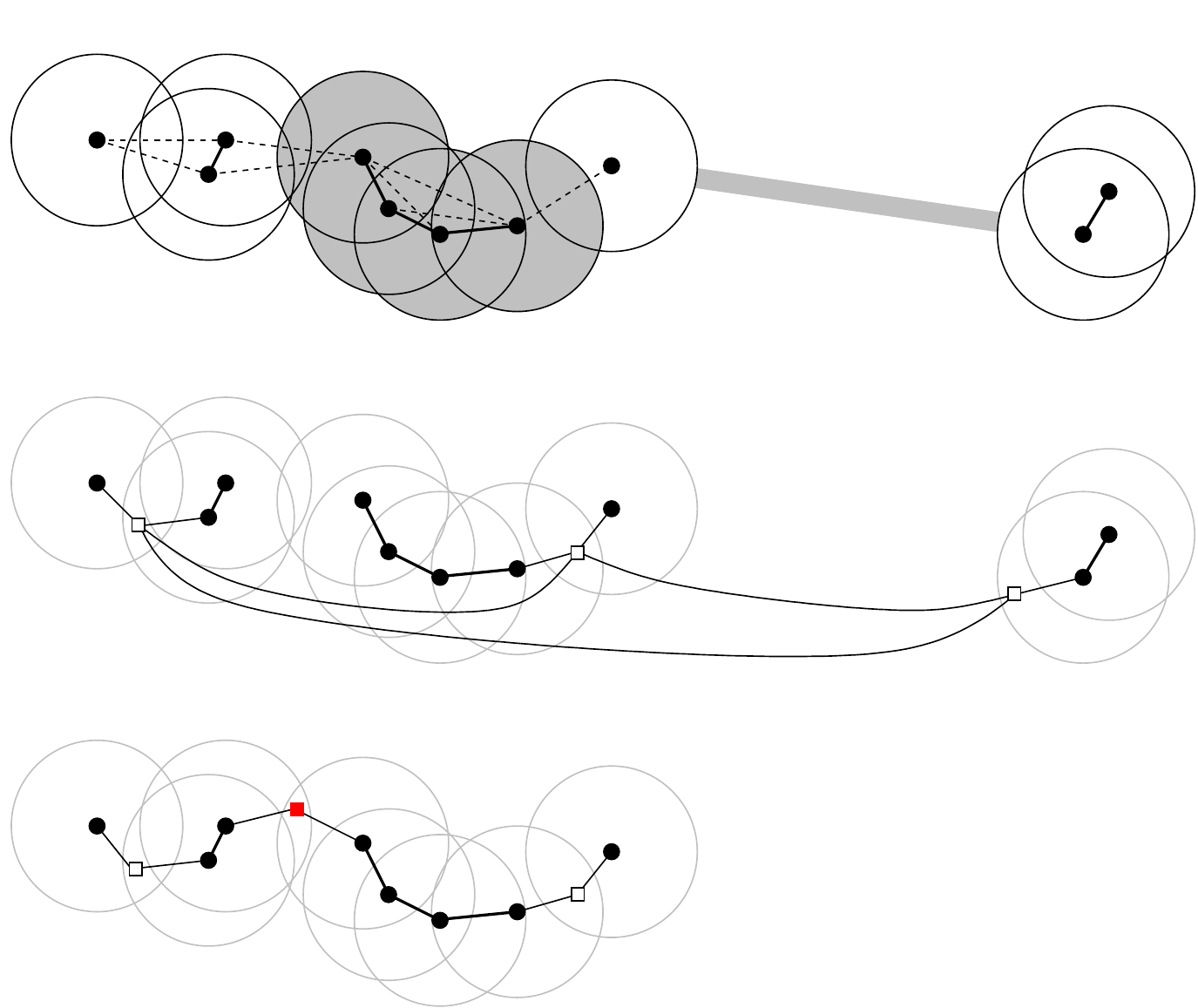_t}}
    \caption{(a)~Dots are sensors in $V$, solid lines are edges in $E$ and $F$, and dashed lines are edges in $F$ only. There are 5 blobs in $\myB$ (one of them highlighted) and 2 clouds $C_1, C_2 \in \myC$. The wide grey line is the only edge in $\MStTN(\myC)$, which happens to be equal to $\MSTN(\myC)$ here. (b)~Stabs. (c)~Hubs.}\label{fig:clouds}
\end{figure}

A \emph{blob} is defined to be the union of the unit disks centered at the sensors that belong to the same connected component of $\myG$. We use $B$ to refer to a blob, and $\myB$ for the set of all blobs.

Analogously, a \emph{cloud} $C \in \myC$ is the union of the unit disks centered at the sensors that belong to the connected component of the graph $\myF$. The sensors in a blob can communicate with each other without relays, while the ones in a cloud might not, even though their disks may overlap. Each cloud $C \in \myC$ consists of one or more blobs $B \in \myB$; we use $\myB_C$ to denote the blobs that form the cloud~$C$.

\subsection{Stabs and Hubs}

A \emph{stab} is a relay with an infinite communication range ($r=\infty$). A \emph{hub} is a relay without the ability to communicate with the other relays (thus hubs can enable communication within one cloud, but are of no use in communicating between clouds). As we shall see, a solution to stab or hub placement can be used as the first step towards a solution for relay placement.

If we are placing stabs, it is necessary and sufficient to have a stab in each blob to ensure communication between all sensors (to avoid trivialities we assume there is more than one blob). Thus, stab placement is a special case of the set cover problem: the universe is the blobs, and the subsets are sets of blobs that have a point in common.  We use $\optstab(\myB')$ to denote the minimum set of stabs that stab each blob in $\myB' \subseteq \myB$. In the example in Figure~\ref{fig:clouds}b small rectangles show an optimal solution to the stab placement problem; 3 stabs are enough.

If we are placing hubs, it is necessary (assuming more than one blob in the cloud), but not sufficient, to have a hub in each blob to ensure communication between sensors within one cloud.  In fact, hub placement can be interpreted as a special case of the \emph{connected} set cover problem \cite{cerdeira05requiring,shuai06connected}.  In the example in Figure~\ref{fig:clouds}c small rectangles show an optimal solution to the hub placement problem for the cloud $C = C_1$; in this particular case, 2 stabs within the cloud $C$ were sufficient to ``pierce'' each blob in $\myB_C$ (see Figure~\ref{fig:clouds}b), however, an additional hub (marked red in Figure~\ref{fig:clouds}c) is required to ``stitch'' the blobs together (i.e., to establish communication between the blobs). The next lemma shows that, in general, the number of additional hubs needed  is less than the number of stabs:

\begin{lemma}\label{lem_stab2hub}
    Given a feasible solution\/ $S$ to stab placement on\/ $\myB_C$, we can obtain in polynomial time a feasible solution to hub placement on\/ $\myB_C$ with\/ $2 \s{S}-1$ hubs.
\end{lemma}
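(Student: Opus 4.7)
The plan is to reuse every stab in $S$ as a hub and then add a small number of extra hubs to stitch the resulting pieces together. First I would observe that if a stab $s \in S$ stabs a blob $B \in \myB_C$, the point $s$ lies in the unit disk of some sensor of $B$, so placing a hub at the location of $s$ puts it within distance $1$ of that sensor; via the intra-blob sensor-sensor edges this hub is then connected to every sensor of $B$. Thus, after converting each $s \in S$ to a hub at the same location, the resulting hub-and-sensor communication graph decomposes the sensors of $C$ into connected components which I will call \emph{super-components}, each of which is a union of entire blobs of $\myB_C$. Since every super-component contains at least one stab, the number $c$ of super-components satisfies $c \le \s{S}$.

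Next I would use the fact that $C$ is a single cloud: any two sensors of $C$ are joined by a chain $u_0, u_1, \ldots, u_k$ with $\myd{u_i}{u_{i+1}} \le 2$. Whenever consecutive sensors $u_i, u_{i+1}$ lie in different blobs of $\myB_C$, placing a hub at the midpoint of $u_i u_{i+1}$ brings it within distance $1$ of each, so this single extra hub merges the two super-components that own those blobs. Let $H$ be the auxiliary graph whose vertices are the $c$ super-components and whose edges are the pairs connectable by such a distance-$\le 2$ cross-boundary sensor pair; the cloud chains above make $H$ connected.

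I would then compute a spanning tree of $H$ -- it has $c-1$ edges -- and for each tree edge place one additional hub at the midpoint of a witnessing sensor pair. The total hub count is $\s{S} + (c-1) \le 2\s{S}-1$, and by construction every blob of $\myB_C$ belongs to some super-component and all super-components are welded together by the connector hubs, yielding a feasible hub placement on $\myB_C$. Every step -- computing blobs, identifying which stabs share a blob, forming super-components, and building the spanning tree -- runs in polynomial time.

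The main obstacle I expect is justifying the connectivity of $H$, which is exactly the place where the cloud hypothesis is essential; without it, two super-components could be arbitrarily far apart in the Euclidean sense and a single connector hub per pair would not suffice. Once cloud connectivity is invoked to supply the distance-$\le 2$ sensor chains, the rest of the argument reduces to a clean ``$\s{S}$ reused hubs plus at most $\s{S}-1$ connector hubs'' spanning-tree count.
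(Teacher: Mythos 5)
Your proposal is correct and follows essentially the same route as the paper: convert the $\s{S}$ stabs into hubs, note that each of the resulting $c \le \s{S}$ connected components contains a stab, and use the cloud property (a distance-$\le 2$ sensor pair across any component boundary, whose midpoint a single hub can cover) to merge components with at most $c-1$ extra hubs. The only cosmetic difference is that you organize the merging step via a spanning tree of an auxiliary component graph, whereas the paper argues iteratively that some merging point always exists while more than one component remains; the count $\s{S}+(c-1)\le 2\s{S}-1$ is identical.
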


\begin{proof}
    Let $\myH$ be the graph, whose nodes are the sensors in the cloud $C$ and the stabs in $S$, and whose edges connect two devices if either they are within distance~1 from each other or if both devices are stabs (i.e., there is an edge between \emph{every} pair of the stabs).  Switch off communication between the stabs, thus turning them into hubs.  Suppose that this breaks $\myH$ into $k$ connected components. There must be a stab in each connected component. Thus, $\s{S} \ge k$.

    If $k > 1$, by the definition of a cloud, there must exist a point where a unit disk covers at least two sensors from two different connected components of $\myH$. Placing a hub at the point decreases the number of the connected components by at least~1. Thus, after putting at most $k-1$ additional hubs, all connected components will merge into one.
\end{proof}

\subsection{Steiner Forests and Spanning Forests with Neighbourhoods}

Let $\myP$ be a collection of planar subsets; call them \emph{neighbourhoods}. (In Section~\ref{sec_apx1tier} the neighbourhoods will be the clouds, in Section~\ref{sec_apx1tierim} they will be ``clusters'' of clouds.)  For a plane graph $G$, let $\myG_{\myP}=(\myP, E(G))$ be the graph whose vertices are the neighbourhoods and two neighbourhoods $P_1,P_2\in\myP$ are adjacent whenever $G$ has a vertex in $P_1$, a vertex in $P_2$, and a path between the vertices.

The \emph{Minimum Steiner Forest with Neighbourhoods} on $\myP$, denoted $\MStTN(\myP)$, is a \emph{minimum-length} plane graph $G$ such that $\myG_{\myP}=(\myP, E(G))$ is connected. The $\MStTN$ is a generalisation of the Steiner tree of a set of points. Note that $\MStTN$ is slightly different from Steiner tree with neighbourhoods (see, e.g., \citet{yang07minimum}) in that we are only counting the part of the graph \emph{outside} $\myP$ towards its length (since it is not necessary to connect neighbourhoods beyond their boundaries).

Consider a complete weighted graph whose vertices are the neighbourhoods in $\myP$ and whose edge weights are the distances between them.  A minimum spanning tree in the graph is called the \emph{Minimum Spanning Forest with Neighbourhoods} on $\myP$, denoted $\MSTN(\myP)$.  A natural embedding of the edges of the forest is by the straight-line segments that connect the corresponding neighbourhoods; we will identify $\MSTN(\myP)$ with the embedding.  (As with $\MStTN$, we count the length of $\MSTN$ only \emph{outside}~$\myP$.)

We denote by $\s{\MStTN(\myP)}$ and $\s{\MSTN(\myP)}$ the total length of the edges of the forests.  It is known that
\[
    \s{\MSTN({P})} \,\le\, \frac{2}{\sqrt{3}} \s{\MStTN({P})}
\]
for a \emph{point} set $P$, where $2/\sqrt{3}$ is the \emph{Steiner ratio} \cite{du90approach}. The following lemma generalises this to neighbourhoods.
\begin{lemma}\label{lem_StRatio}
    For any\/ $\myP$, $\s{\MSTN(\myP)} \le (2/\sqrt{3}) \s{\MStTN(\myP)}$.
\end{lemma}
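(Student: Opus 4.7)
The plan is to reduce to the classical Steiner ratio on Euclidean point sets by viewing an optimal $T = \MStTN(\myP)$ as a union of Steiner trees whose terminals lie on the neighbourhood boundaries. First, I would argue that we may assume without loss of generality that every edge of $T$ is a straight segment lying in the complement of $\bigcup_{P \in \myP} \interior(P)$, with endpoints either at Steiner points outside all neighbourhoods or on $\partial P$ for some $P$. Indeed, if an edge of $T$ crosses the interior of some $P$, I would cut it at each of its two intersections with $\partial P$ and discard the free middle piece; this does not change $|T|$ (the discarded piece contributes nothing to the length measured outside the neighbourhoods), and both remaining pieces still touch $P$, so $\myG_\myP$ stays connected.

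Next, let $T_1, \ldots, T_m$ be the connected components of the resulting plane graph, which we may take to be trees (any edge on a cycle can be removed without disconnecting $\myG_\myP$ or reducing connectivity of any~$T_i$). For each $i$, let $V_i$ be the set of vertices of $T_i$ lying on some $\partial P$. Then $T_i$ is, in the classical Euclidean sense, a Steiner tree on the terminal set $V_i$ (with any degree-$\ge\!2$ vertices off the boundaries serving as Steiner points), and its length as a Steiner tree equals its contribution $|T_i|$ to $|\MStTN(\myP)|$. Invoking the classical Steiner ratio on point sets \cite{du90approach}, we get $|\mathrm{MST}(V_i)| \le (2/\sqrt{3})|T_i|$. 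Summing over $i$, the forest $M = \bigcup_i \mathrm{MST}(V_i)$ on the vertex set $\bigcup_i V_i$ has total length at most $(2/\sqrt{3})|T|$.

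To convert $M$ into an MSTN candidate, add inside each neighbourhood $P$ a set of free intra-neighbourhood edges connecting all points of $P \cap \bigcup_i V_i$; these contribute nothing to length in the $\MStTN$/$\MSTN$ sense. Because $T$ makes the contracted graph on $\myP$ connected and each $\mathrm{MST}(V_i)$ spans $V_i$, the augmented graph, contracted by identifying each neighbourhood to a single vertex, is connected on $\myP$. Taking a spanning subtree of this contracted graph and replacing each remaining inter-neighbourhood straight segment by the minimum-distance segment between its two neighbourhoods (which is no longer than the original) yields a valid $\MSTN(\myP)$ of length at most $|M| \le (2/\sqrt{3})|\MStTN(\myP)|$, as required. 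The main obstacle I expect is the WLOG reduction in the first step: one must verify carefully that the surgery on edges crossing neighbourhood interiors preserves both $|T|$ and the connectivity of $\myG_\myP$, so that the classical point-set Steiner ratio may legitimately be applied componentwise and the pieces may be stitched back together through the free intra-neighbourhood links.
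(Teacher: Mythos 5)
Your proposal is correct and follows essentially the same route as the paper's (much terser) proof: erase the neighbourhoods so that $\MStTN(\myP)$ falls apart into component trees, apply the classical point-set Steiner ratio to each component with its boundary attachment points as terminals, and stitch the resulting spanning trees back together through the neighbourhoods. The extra care you take with the edge surgery and with extracting a spanning tree of the contracted graph is exactly the detail the paper leaves implicit, and your observation that each component need only be \emph{a} Steiner tree (not a minimum one) for the ratio bound to apply is a valid minor simplification.
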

\begin{proof}
    If $\myP$ is erased, $\MStTN(\myP)$ falls off into a forest, each tree of which is a minimum Steiner tree on its leaves; its length is within the Steiner ratio of minimum spanning tree length.
\end{proof}

\subsection{Lower Bounds on the Number of Relays}\label{ssec:lower-bounds}

Let $\optrel$ be an optimal set of relays.  Let $\myR$ be the communication graph on the relays $\optrel$ alone, i.e., without sensors taken into account; two relays are connected by an edge in $\myR$ if and only if they are within distance $r$ from each other. Suppose that $\myR$ is embedded in the plane with vertices at relays and line segments joining communicating relays.  The embedding spans all clouds, for otherwise the sensors in a cloud would not be connected to the others. Thus, in $\myR$ there exists a forest $\myR'$, whose embedding also spans all clouds. Let $\s{\myR'}$ denote the total length of the edges in $\myR'$. By definition of $\MStTN(\myC)$, we have $\s{\myR'} \ge \s{\MStTN(\myC)}$.

Let $m$, $v$, and $k$ be the number of edges, vertices, and trees of $\myR'$.  Since each edge of $\myR'$ has length at most $r$, we have $\s{\myR'} \le m r = (v - k) r$. Since $v \le \s{\optrel}$, since there must be a relay in every blob and every cloud, and since the clouds are disjoint, it follows that
\begin{align}
    \s{\optrel} &\ge \s{\MStTN(\myC)} / r, \label{eq_lower_bound_MStTN}\\
    \s{\optrel} &\ge \s{\optstab(\myB)}, \label{eq_lower_bound_Stab}\\
    \s{\optrel} &\ge \s{\myC}. \label{eq_lower_bound_C}
\end{align}


\section{A 6.73-Approximation Algorithm for\mytitlebreak One-Tier Relay Placement}\label{sec_apx1tier}

In this section we give a simple 6.73-approximation algorithm for relay placement.  We first find an approximately optimal stab placement. Then we turn a stab placement into a hub placement within each cloud. Then a spanning tree on the clouds is found and ``Steinerised''.

Finding an optimal stab placement is a special case of the set cover problem. The maximum number of blobs pierced by a single stab is~$5$ (since this is the maximum number of unit disks that can have non-empty intersection while avoiding each other's centers). Thus, in this case the greedy heuristic for the set cover has an approximation ratio of $1+1/2+1/3+1/4+1/5=\rstabt$ \cite[Theorem~35.4]{cormen01introduction}.

Based on this approximation, a feasible hub placement $R_C$ within one cloud $C \in \myC$ can be obtained by applying Lemma~\ref{lem_stab2hub}; for this set of hubs it holds that
\[
    \s{R_C}
    \,\le\, \rhub \s{\optstab(\myB_C)} - 1.
\]
We can now interpret hubs $R_C$ as relays; if the hubs make the cloud $C$ connected, surely it holds for relays.

Let $R' = \bigcup_C{R_C}$ denote all relays placed this way. Since the blobs $\myB_C$ for different $C$ do not intersect, $\s{\optstab(\myB)} = \sum_C \s{\optstab(\myB_C)}$, so
\begin{equation}\label{eq_R1}
    \s{R'}
    \,\le\, \sum_C{\s{R_C}}
    \,\le\, \sum_C \left( \rhub \s{\optstab(\myB_C)} - 1 \right)
    \,=\, \rhub \s{\optstab(\myB)} - \s{\myC}.
\end{equation}

Next, we find $\MSTN(\myC)$ and place another set of relays, $R''$, along its edges.  Specifically, for each edge $e$ of the forest, we place $2$ relays at the endpoints of $e$, and  $\floor{\s{e}/r}$ relays every $r$ units starting from one of the endpoints. This ensures that all clouds communicate with each other; thus $R = R' \cup R''$ is a feasible solution. Since the number of edges in $\MSTN(\myC)$ is $\s{\myC}-1$,
\begin{equation}\label{eq_R2}
    \s{R''}
    \,=\, 2(\s{\myC} - 1) + \sum_e \left\lfloor \frac{\s{e}}{r} \right\rfloor
    \,<\, 2\s{\myC} +  \frac{\s{\MSTN(\myC)}}{r}.
\end{equation}
We obtain
\[
    \s{R}
    \,=\, \s{R'} + \s{R''}
    \,\le\, \left(\rhub + 1 + \frac{2}{\sqrt{3}}\right) \s{\optrel}
    \,<\, 6.73 \s{\optrel}
\]
from \eqref{eq_lower_bound_MStTN}--\eqref{eq_R2} and Lemma~\ref{lem_StRatio}.

\subsection{Running Time}

To implement the above algorithm in $O(n\log n)$ time, we construct the blobs (this can be done in $O(n\log n)$ time since the blobs are the union of disks centered on the sensors), assign each blob a unique colour, and initialise a Union-Find data structure for the colours. Next, we build the arrangement of the blobs, and sweep the arrangement $4$ times, once for each $d=5,4,3,2$; upon encountering a $d$-coloured cell of the arrangement, we place the stab anywhere in the cell, merge the corresponding $d$ colours, and continue. Finally, to place the hubs we do one additional sweep.

As for the last step -- building $\MSTN(\myC)$ -- it is easy to see that just as the ``usual'' minimum spanning tree of a set of points, $\MSTN(\myC)$ uses only the edges of the relative neighbourhood graph of the sensors (refer, e.g., to \citet[p.~217]{berg08computational} for the definition of the graph).
Indeed, let $pq$ be an edge of $\MSTN(\myC)$; let $p'$ and $q'$ be the sensors that are within distance $1$ of $p$ and $q$, respectively. If there existed a sensor $s'$ closer than $\myd{p'}{q'}$ to both $p'$ and $q'$, the edge $pq$ could have been swapped for a shorter edge (Figure~\ref{fig:nlogn}).
\begin{figure}[ht]
\centering
\scalebox{1.1}{\input{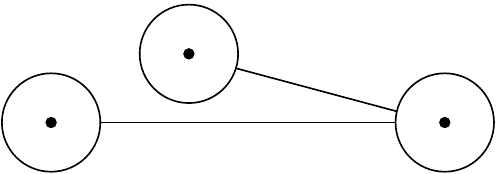_t}}
\caption{Edge $pq$ could be swapped for a shorter edge.}\label{fig:nlogn}
\end{figure}

It remains to show how to build and sweep the arrangement of blobs in $O(n\log n)$ time.
Since the blobs are unions of unit disks, their total complexity is linear (see, e.g., \citet[Theorem~13.9]{berg08computational}). Moreover, the arrangement of the blobs also has only linear complexity (see Lemma~\ref{lem:arr} below); this follows from the fact that every point can belong to only a constant number of blobs (at most $5$). Thus, we can use sweep to build the arrangement in $O(n\log n)$ time, and also, of course, sweep the arrangement within the same time bound.

\begin{lemma}\label{lem:arr}
The arrangement of the blobs has linear complexity.
\end{lemma}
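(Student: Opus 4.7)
The plan is to combine Euler's formula for the planar graph underlying the arrangement with the depth-$5$ bound. At most $5$ distinct blob boundaries can meet at any point, so every arrangement vertex has degree at most $10$; planarity then yields $E = O(V)$ and $F = O(V)$, reducing the task to bounding the vertex count $V$.

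I would split $V$ into (a) breakpoints of individual blob boundaries, where two unit-circle arcs on a single $\partial B_i$ join, and (b) transverse crossings between the boundaries of two different blobs. Since each blob is a union of unit disks, the standard linear complexity bound for such a union gives $V_a = O(n)$ when summed over all blobs, so the crux is to show that the crossing count $V_b$ is also $O(n)$.

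For each crossing $p \in \partial B_i \cap \partial B_j$ with $B_i \ne B_j$, there exist sensors $u \in B_i$ and $w \in B_j$ with $\myd{p}{u} = \myd{p}{w} = 1$, hence $1 < \myd{u}{w} \le 2$ (the strict lower bound comes from $u$ and $w$ lying in different blobs). A packing argument---essentially the one underlying the $\le 5$ bound itself---shows that around any fixed sensor $u$, only a constant number of other blobs can place a sensor in the annulus $1 < \myd{x}{u} \le 2$: picking one representative per such blob yields points pairwise at distance more than $1$ inside a disk of radius $2$, which caps their number by a constant.

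The main obstacle is that even though only $O(1)$ distinct other blobs meet a fixed disk $D(u)$, a single such blob $B_j$ could still contain many sensors near $u$ and thus produce many potential circle-circle intersections on $\partial D(u)$. To handle this, I would count only the intersections that lie on both $\partial B_i$ and $\partial B_j$---those not swallowed by the interior of either union---and use the depth-$5$ bound together with the connectedness of each blob in a local amortisation to show that each sensor contributes only $O(1)$ such ``useful'' crossings. Summing over the $n$ sensors then gives $V_b = O(n)$ and concludes the lemma.
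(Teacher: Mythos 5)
Your decomposition of the arrangement vertices (breakpoints on a single blob boundary versus transverse crossings between two blob boundaries) and your treatment of the first type match the paper exactly, and your packing observation -- that only $O(1)$ distinct other blobs can have a sensor within distance $2$ of a fixed sensor $u$, because sensors in different blobs are pairwise more than $1$ apart -- is sound and plays the same role as the paper's constant $K$. The problem is the step you yourself flag as the crux: you assert that ``the depth-$5$ bound together with the connectedness of each blob in a local amortisation'' shows each sensor contributes only $O(1)$ useful crossings, but you never give that argument, and as a per-sensor statement it is actually false. A single unit circle $\partial D(u)$ can carry $\Theta(n)$ arcs of the boundary of a union of unit disks (many nearly-tangent disks each occluding a tiny arc of $\partial D(u)$ leave many exposed arcs between them), so no local, per-sensor constant bound exists; the bound is only true in aggregate, and neither the depth-$5$ bound nor connectedness supplies the aggregation. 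As written, your count of type-(b) vertices could still be superlinear, e.g.\ if you end up summing a linear bound over all $\binom{|\myB|}{2}$ pairs of blobs.

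The paper closes exactly this gap with two ingredients you are missing. First, it invokes the standard \emph{global} result that the union of $n$ unit disks has $O(n)$ boundary complexity, applied to the disks of two blobs at a time: the crossings of $\partial B_i$ with $\partial B_j$ are charged to the boundary of the union of the $n_i+n_j$ disks forming those two blobs, giving $O(n_i+n_j)$ rather than a per-sensor constant. Second, to avoid a quadratic blow-up over pairs of blobs, it localises with a unit-square tiling: within any tile $A$ at most $K=O(1)$ blobs appear, so summing $O(n_i+n_j)$ over the at most $\binom{K}{2}$ relevant pairs costs only $O(K^2\, n(A))$, and summing over tiles (each disk meets $O(1)$ tiles) gives $O(n)$ overall. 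If you replace your ``local amortisation'' sketch with this pair-of-blobs union bound plus a localisation device (your per-sensor neighbourhoods could serve in place of tiles, but you must then do the bookkeeping of which disk--disk crossings are charged where), the proof goes through.
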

\begin{proof}
The vertices of the arrangement are of two types -- the vertices of the blobs themselves and the vertices of the intersection of two blobs (we assume that no three blobs intersect in a single point). The total number of the vertices of the first type is linear, so we focus on the vertices of the second type.

Let $A$ be a tile in the infinite unit-square tiling of the plane. There is not more than a constant number $K$ of blobs that intersect $A$ (since there is not more than a constant number of points that can be placed within distance $1$ from $A$ so that the distance between any two of the points is larger than~$1$). Let $n_i$ be the number of disks from blob $i$ that intersect $A$. Every vertex of the arrangement inside $A$ is on the boundary of the union of some two blobs. Because the union of blobs has linear complexity, the number of vertices that are due to intersection of blobs $i$ and $j$ is $O(n_i +n_j)$. Since there is at most $K$ blobs for which $n_i \ne 0$, we have
\[
    \sum_{i,j} (n_i+n_j) \le \binom{K}2 n(A),
\]
where $n(A)$ is the total number of disks intersecting $A$. Clearly, each unit disk intersects only a constant number of the unit-square tiles, and only a linear number of tiles is intersected by the blobs. Thus, summing over all tiles, we obtain that the total complexity of the arrangement is $O(K^2 n)=O(n)$.
\end{proof}


\section{A 3.11-Approximation Algorithm for\mytitlebreak One-Tier Relay Placement}\label{sec_apx1tierim}

In this section we first take care of clouds whose blobs can be stabbed with few relays, and then find an approximation to the hub placement by greedily placing the hubs themselves, without placing the stabs first, for the rest of the clouds. Together with a refined analysis, this gives a polynomial-time $3.11$-approximation algorithm.  We focus on nontrivial instances with more than one blob.

\subsection{Overview}

The basic steps of our algorithm are as follows:
\begin{enumerate}[noitemsep]
\item Compute optimal stabbings for the clouds that can be stabbed with few relays.
\item Connect the blobs in each of these clouds, using Lemma~\ref{lem_stab2hub}.
\item Greedily connect all blobs in each of the remaining clouds (``stitching'').
\item Greedily connect clouds into clusters, using 2 additional relays per cloud.
\item Connect the clusters by a spanning forest.
\end{enumerate}

Our algorithm constructs a set $A_r$ of ``red'' relays (for connecting blobs in a cloud, i.e., relays added in steps~1--3), a set $A_g$ of ``green'' relays (two per cloud, added in steps~4--5) and a set $A_y$ of ``yellow'' relays (outside of sensor range, added in step~5). Refer to Figures~\ref{fig:red} and~\ref{fig:greenyellow}. In the analysis, we compare an optimal solution $\optrel$ to our approximate one by subdividing the former into a set $\optrel_d$ of ``dark'' relays that are within reach of sensors, and into a set $\optrel_\ell$ of ``light'' relays that are outside of sensor range. We compare $\s{\optrel_d}$ with $\s{A_r}+\s{A_g}$, and $\s{\optrel_\ell}$ with $\s{A_y}$, showing in both cases that the ratio is less than~$3.11$.
\begin{figure}\centering
\input{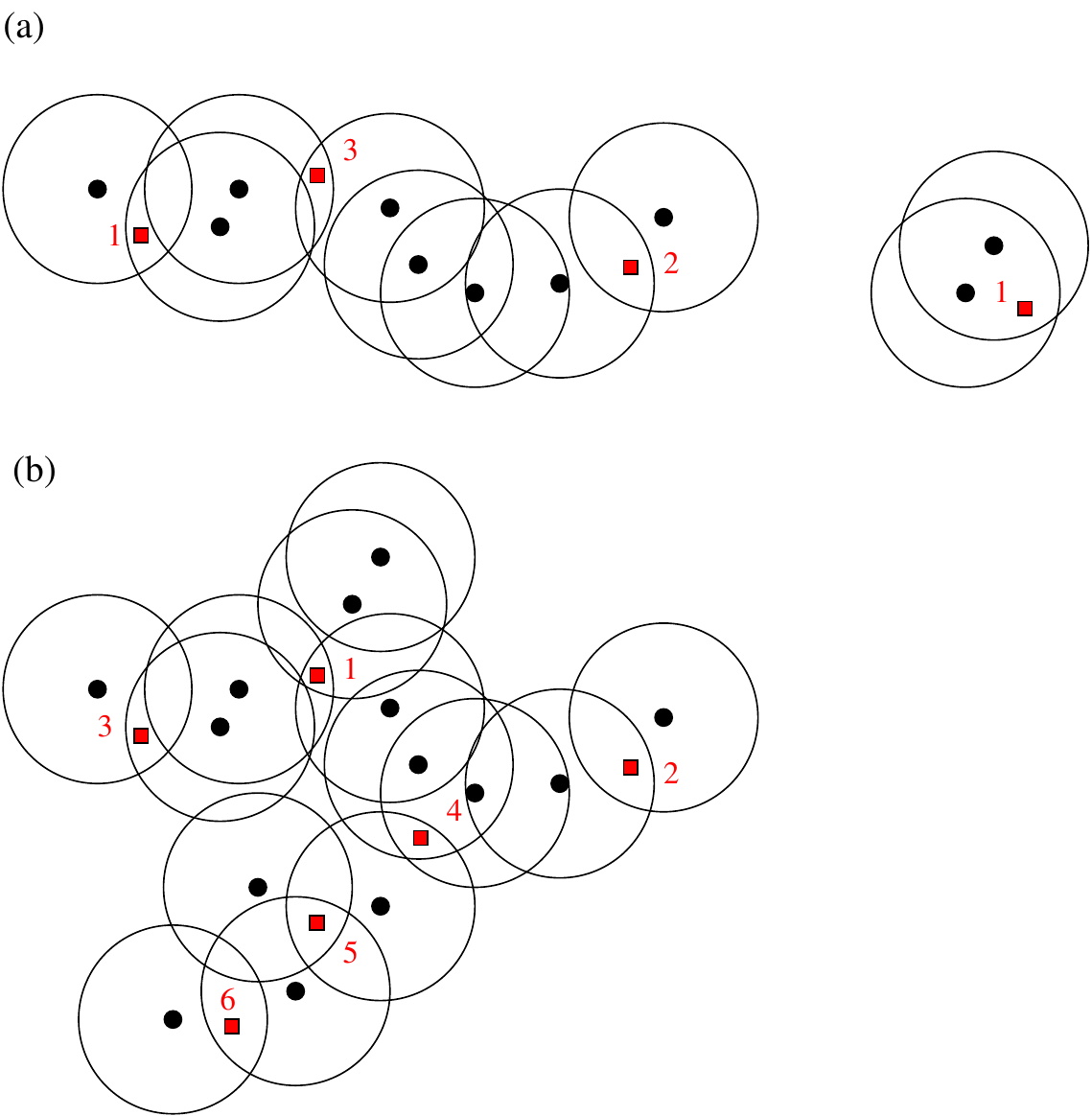_t}
\caption{Red relays placed by our algorithm (the sensors are the solid circles); the numbers indicate the order in which the relays are placed within each cloud. (a)~Stab the clouds that can be stabbed by placing few relays; the clouds are then stitched by placing the hubs as in Lemma~\ref{lem_stab2hub}. (b)~Greedily stitch the other clouds.}\label{fig:red}
\end{figure}

\begin{figure}\centering
\includegraphics[page=1]{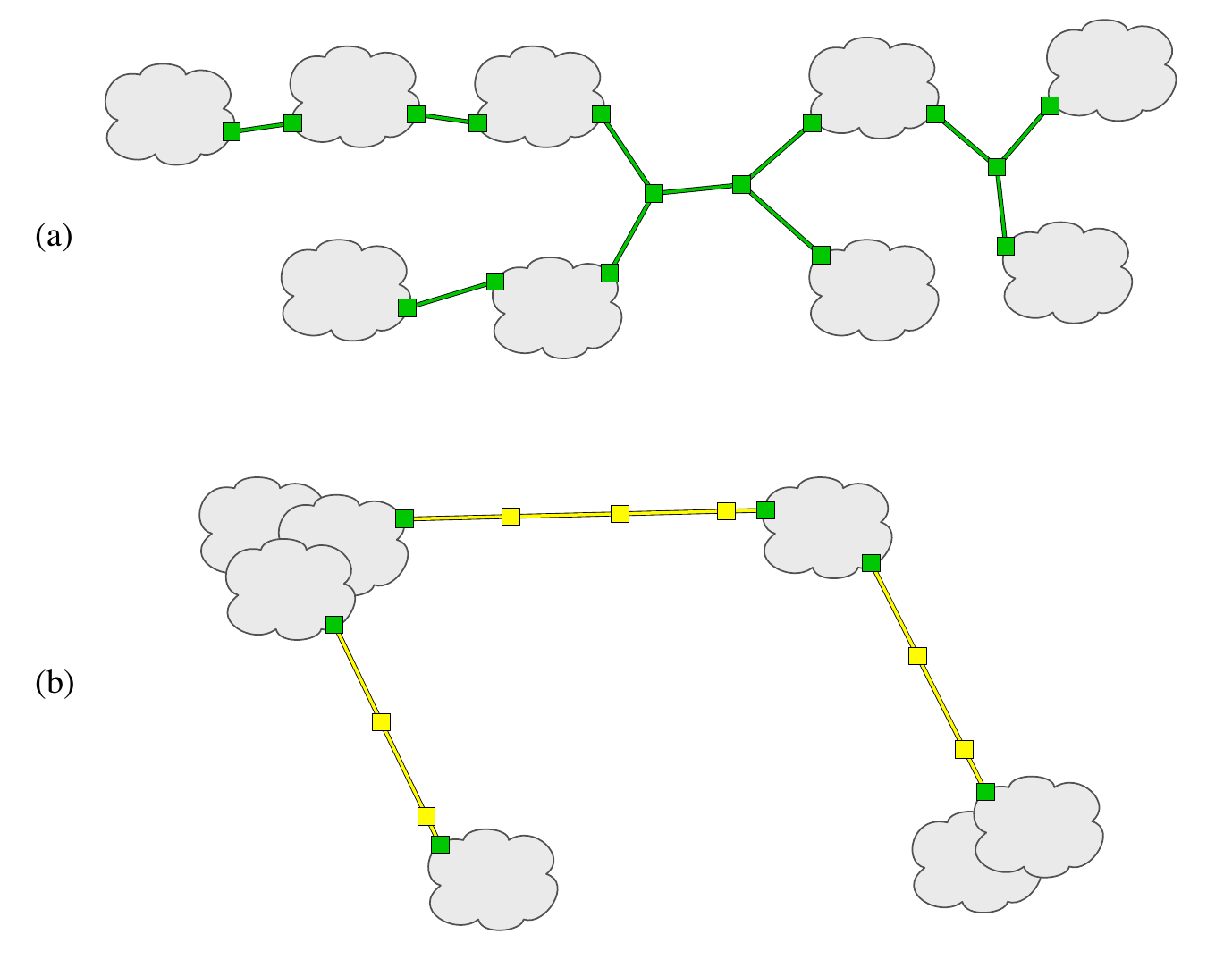}
\caption{(a)~Green relays connect clouds into clusters -- on average, we use at most 2 green relays per cloud. (b)~Green (inside clouds) and yellow (outside clouds) relays interconnect the cloud clusters by a spanning tree.}\label{fig:greenyellow}
\end{figure}

\subsection{Clouds with Few Stabs}

For any constant $k$, it is straightforward to check in polynomial time whether all blobs in a cloud $C\in\myC$ can be stabbed with $i<k$ stabs. (For any subset of $i$ cells of the arrangement of unit disks centered on the sensors in $C$, we can consider placing the relays in the cells and check whether this stabs all blobs.)  Using Lemma~\ref{lem_stab2hub}, we can connect all blobs in such a cloud with at most $2i-1$ red relays. We denote by $\myC^{i}$ the set of clouds where the minimum number of stabs is $i$, and by $\myC^{k+}$ the set of clouds that need at least $k$ stabs.

\subsection{Stitching a Cloud from \texorpdfstring{$\myC^{k+}$}{Ck+}}

We focus on one cloud $C \in \myC^{k+}$. For a point $y$ in the plane, let
\[
    \myB(y) = \{ B \in \myB_C : y \in B \}
\]
be the set of blobs that contain the point; obviously $\s{\myB(y)} \le 5$ for any $y$. For any subset of blobs $\myT \subseteq \myB_C$, define $ \myS(\myT,y) = \myB(y) \setminus \myT$ to be the set of blobs \emph{not from $\myT$} containing $y$, and define $V(\myT)$ to be the set of sensors that form the blobs in~$\myT$.

Within $C$, we place a set of red relays $A_r^C = \{ y_j : j = 1, 2, \dotsc \}$, as follows:
\begin{enumerate}
    \item Choose arbitrary $B_0 \in \myB_C$.
    \item Initialise $j \gets 1$, $\myT_j \gets \{ B_0 \}$.
    \item While $\myT_j \ne \myB_C$:
        \\[1ex]\hspace*{1em}%
        $\begin{aligned}
            y_j &\gets \arg \textstyle\max_{y} \{ \s{\myS(\myT_j,y)} : \myB(y) \cap \myT_j \ne \emptyset \}, \\
            \myS_j &\gets \myS(\myT_j,y_j), \\
            \myT_{j+1} &\gets \myT_j \cup \myS_j, \\
            j &\gets j + 1.
        \end{aligned}$
\end{enumerate}
That is, $y_j$ is a point contained in a maximum number of blobs \emph{not from $\myT_j$} that intersect a blob from $\myT_j$. In other words, we stitch the clouds greedily; the difference from the usual greedy (used in the previous section) is that we insist that {\em some} blob stabbed by $y_j$ is already in $\myT_j$.

By induction on $j$, after each iteration, there exists a path through sensors and/or relays between any pair of sensors in $V(\myT_j)$. By the definition of a cloud, there is a line segment of length at most $2$ that connects $V(\myT_j)$ to $V(\myB_C \setminus \myT_j)$; the midpoint of the segment is a location $y$ with $\myS(\myT_j,y) \ne \emptyset$. Since each iteration increases the size of $\myT_j$ by at least $1$, the algorithm terminates in at most $\s{\myB_C}-1$ iterations, and $\s{A_r^C} \le \s{\myB_C} - 1$.  The sets $\myS_j$ form a partition of $\myB_C \setminus \{ B_0 \}$.

We prove the following performance guarantee (the proof is similar to the analysis of greedy set cover.)

\begin{lemma}\label{lem:stitching}
    For each cloud\/ $C$ we have\/ $\s{A_r^C} \, \le \, 37 \s{\optrel_d \cap C} / 12 - 1$.
\end{lemma}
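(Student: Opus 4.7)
The plan is a dual-fitting/charging argument, closely mirroring the textbook $H_d$-approximation analysis of the greedy set cover algorithm and adapted to the adjacency-constrained stitching greedy described just above the lemma statement.

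First I would observe that $\optrel_d \cap C$ is itself a feasible stabbing of $\myB_C$: any dark relay within distance $1$ of a sensor of some $B \in \myB_C$ must lie inside $B$, and every sensor needs such a relay by definition of $\optrel_d$. Setting $t := \s{\optrel_d \cap C}$, each $o \in \optrel_d \cap C$ is contained in $\s{\myB(o)} \le 5$ blobs. I would then assign to every blob $B \in \myB_C \setminus \{B_0\}$ the price $p_B := 1/\s{\myS_{j(B)}}$, where $j(B)$ is the unique iteration at which $B$ is added to $\myT$. Since the $\myS_j$'s partition $\myB_C \setminus \{B_0\}$, we have $\sum_B p_B = \s{A_r^C}$, so it suffices to bound the total price charged to each OPT stab.

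The key observation for the per-stab bound is that whenever $\myB(o) \cap \myT_j \ne \emptyset$, the location $o$ is itself a valid candidate for $y_j$, so the greedy choice satisfies $\s{\myS_j} \ge \s{\myB(o) \setminus \myT_j}$. Ordering the blobs of $\myB(o) \setminus \{B_0\}$ by iteration of cover as $B_1, B_2, \dotsc$, a standard harmonic-sum computation then gives $\sum_i p_{B_i} \le H_{\s{\myB(o)}-1} \le H_4 = 25/12$ whenever $B_0 \in \myB(o)$ (so that $o$ is valid from the very first iteration), and $\sum_i p_{B_i} \le 1 + H_{\s{\myB(o)}-1} \le 1 + H_4 = 37/12$ otherwise. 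The extra additive $1$ in the second case reflects the fact that $o$ only becomes a valid candidate once $B_1$ has already been stitched in by some other greedy step, so $B_1$ must be charged the trivial bound $p_{B_1} \le 1$ while $B_2, B_3, \dotsc$ still inherit the harmonic estimates. Summing over the $t$ stabs of $\optrel_d \cap C$ and using that at least one of them must contain $B_0$ yields
\[
    \s{A_r^C} \,\le\, \frac{25}{12} + (t-1)\cdot\frac{37}{12} \,=\, \frac{37\,t}{12} - 1.
\]

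The only real subtlety — and the sole place where the constrained greedy deviates from the textbook set-cover analysis — is this handling of the first blob of an OPT stab whose earlier blobs are all disjoint from $\myT$: at that iteration $o$ is not a valid candidate, the dual fit loses an additive $1$ against the ideal $H_{\s{\myB(o)}}$ bound, and this is precisely what lifts the per-stab bound from $H_4 = 25/12$ to $1 + H_4 = 37/12$ (and, globally, yields the $-1$ term through the unique stab covering $B_0$). Handling tie-breaking (several blobs of $\myB(o)$ entering $\myT$ at the same iteration, which only makes the bound tighter because such a common $\s{\myS_j}$ must be at least the number of ties) and the cases $\s{\myB(o)} < 5$ is routine bookkeeping; no further obstacle is anticipated.
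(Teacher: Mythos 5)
Your proposal is correct and is essentially the paper's own argument: the same greedy-set-cover-style charging with prices $1/\s{\myS_j}$, the same key observation that an optimal dark relay $z$ becomes a valid candidate location as soon as one of its blobs enters $\myT_j$ (splitting $\myB(z)$ into the already-stitched part charged a total of at most $1$ and the rest charged harmonically), and the same $1+1/4+1/3+1/2+1 = 37/12$ per-relay bound. The only cosmetic difference is the bookkeeping of the $-1$ term: the paper assigns $B_0$ an extra weight of $1$ so that the total weight is $\s{A_r^C}+1$ and every relay is charged at most $37/12$, whereas you charge the relay covering $B_0$ only $25/12$; the two are equivalent.
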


\begin{proof}
    For each $B \in \myB_C \setminus \{B_0\}$, define the weight $w(B) = 1 / \s{\myS_j}$, where $\myS_j$ is the unique set for which $B \in \myS_j$. We also set $w(B_0) = 1$. We have
    \begin{equation}\label{eq_J+1}
        \sum_{B\in \myB_C} \!\! w(B) \,=\, \s{A_r^C}+1.
    \end{equation}

    Consider a relay $z \in \optrel_d \cap C$, and find the smallest $\ell$ with $\myT_\ell \cap \myB(z) \ne \emptyset$, that is, $\ell = 1$ if $B_0 \in \myB(z)$, and otherwise $y_{\ell-1}$ is the first relay that pierced a blob from $\myB(z)$. Partition the set $\myB(z)$ into $\myU(z) = \myT_\ell \cap \myB(z)$ and $\myV(z) = \myB(z) \setminus \myU(z)$. Note that $\myV(z)$ may be empty, e.g., if $y_{\ell-1}=z$.

    First, we show that
    \[
        {\sum_{B \in \myU(z)} \!\!\! w(B)} \,\le\, 1 .
    \]
    We need to consider two cases. It may happen that $\ell = 1$, which means that $B_0 \in \myB(z)$ and $\myU(z) = \{B_0\}$. Then the total weight assigned to the blobs in $\myU(z)$ is, by definition, $1$. Otherwise $\ell > 1$ and $\myU(z) \subseteq S_{\ell-1}$, implying $ w(B) = 1/\s{S_{\ell-1}} \le 1/{\s{\myU(z)}}$ for each $B \in \myU(z)$.

    Second, we show that
    \[
        {\sum_{B \in \myV(z)} \!\!\! w(B)} \,\le\, \frac{1}{\s{\myV(z)}} + \frac{1}{\s{\myV(z)} - 1} + \dotsb + \frac{1}{1} .
    \]
    Indeed, at iterations $j \ge \ell$, the algorithm is able to consider placing the relay $y_j$ at the location $z$. Therefore $\s{\myS_j} \ge \s{\myS(\myT_j, z)}$. Furthermore,
    \[
        \myS(\myT_j, z) \setminus \myS(\myT_{j+1}, z)
        \,=\, \myB(z)\cap \myS_j
        \,=\, \myV(z) \cap \myS_j .
    \]
    Whenever placing the relay $y_j$ makes $\s{\myS(\myT_j, z)}$ decrease by a number $a$, exactly $a$ blobs of $\myV(z)$ get connected to $\myT_j$. Each of them is assigned the weight $w(C) \le 1/\s{\myS(\myT_j, z)}$.  Thus,
    \[
        \sum_{B \in \myV(z)} w(B)
        \,\le\, \frac{a_1}{a_1+a_2+\dotsb+a_n} + \frac{a_2}{a_2+a_3+\dotsb+a_n} + \dotsb + \frac{a_n}{a_n} ,
    \]
    where $a_1,a_2,\dotsc,a_n$ are the number of blobs from $\myV(z)$ that are pierced at different iterations, $\sum_i a_i = \s{\myV(z)}$.  The maximum value of the sum is attained when $a_1=a_2=\dotsb=a_n=1$ (i.e., every time $\s{\myV(z)}$ is decreased by 1, and there are $\s{\myV(z)}$ summands).

    Finally, since $\s{\myB(z)} \le 5$, and $\myU(z) \ne \emptyset$, we have $\s{\myV(z)}\le4$.  Thus,
    \begin{equation}\label{eq_Wz<}
        W(z) \,= {\sum_{B\in \myU(z)} \!\!\! w(B) \,+\, \sum_{B\in \myV(z)} \!\!\! w(B)}
        \,\le\, 1 + \frac14 + \frac13 + \frac12 + \frac11
        \,=\, \frac{37}{12}.
    \end{equation}
    The sets $\myB(z)$, $z \in \optrel_d \cap C$, form a cover of $\myB_C$. Therefore, from \eqref{eq_J+1} and \eqref{eq_Wz<},
    \[
            \frac{37}{12} \s{\optrel_d \cap C}
            \,\ge\! \sum_{z \in \optrel_d \cap C} \!\! W(z)
            \,\ge\! \sum_{B\in \myB_C} \!\! w(B) \,=\, \s{A_r^C} + 1. \qedhere
    \]
\end{proof}

\subsection{Green Relays and Cloud Clusters}

At any stage of the algorithm, we say that a set of clouds is \emph{interconnected} if, with the current placement of relays, the sensors in the clouds can communicate with each other.  Now, when all clouds have been stitched (so that the sensors within any one cloud can communicate), we proceed to interconnecting the clouds.  First we greedily form the collection of cloud \emph{clusters} (interconnected clouds) as follows. We start by assigning each cloud to its own cluster. Whenever it is possible to interconnect two clusters by placing one relay within each of the two clusters, we do so. These two relays are coloured green. After it is no longer possible to interconnect 2 clusters by placing just 2 relays, we repeatedly place 4 green relays wherever we can use them to interconnect clouds from 3 different clusters. Finally, we repeat this for 6 green relays that interconnect 4 clusters.

On average we place 2 green relays every time the number of connected components in the communication graph on sensors plus relays decreases by~one.

\subsection{Interconnecting the Clusters}

Now, when the sensors in each cloud and the clouds in each cluster are interconnected, we interconnect the clusters by a minimum Steiner forest with neighbourhoods. The forest is slightly different from the one used in the previous section. This time we are minimising the total number of relays that need to be placed along the edges of the forest in order to interconnect the clusters; we denote this forest by $\MSTN'$. The forest can be found by assigning appropriate weights to the edges of the graph on the clusters -- the weight of an edge is the number of relays that are necessary to interconnect two clusters.

After $\MSTN'$ is found, we place relays along edges of the forest just as we did in the simple algorithm from the previous section. This time though we assign colours to the relays. Specifically, for each edge $e$ of the forest, we place $2$ green relays at the endpoints of $e$, and $\floor{\s{e}/r}$ yellow relays every $r$ units starting from one of the endpoints.
As with interconnecting clouds into the clusters, when interconnecting the clusters we use 2 green relays each time the number of connected components of the communication graph decreases by one.  Thus, overall, we use at most $2\s{\myC}-2$ green relays.

\subsection{Analysis: Red and Green Relays}

Recall that for $i<k$, $\myC^i$ is the class of clouds that require precisely $i$ relays for stabbing, and $\myC^{k+}$ is the class of clouds that need at least $k$ relays for stabbing.  An optimal solution $\optrel$ therefore contains at least
\[
    \s{\optrel_d} \ge k\s{\myC^{k+}}+\sum_{i=1}^{k-1}i \s{\myC^i}
\]
dark relays (relays inside clouds, i.e., relays within reach of sensors).  Furthermore, $\s{\optrel_d \cap C} \ge 1$ for all $C$.

Our algorithm places at most $2i-1$ red relays per cloud in $\myC^i$, and not more than $37 \s{\optrel_d\cap C} / 12 -1$ red relays per cloud in $\myC^{k+}$. Adding a total of $2\s{\myC}-2$ green relays used for clouds interconnections, we get
\begin{align*}
    \s{A_r}+\s{A_g}
    &\,\le \sum_{C \in \myC^{k+}} \left( \frac{37}{12} \s{\optrel_d \cap C} - 1 \right) +
\sum_{i=1}^{k-1} (2i-1)\s{\myC^{i}} + 2 \s{\myC} - 2 \\
    &\,\le\, \frac{37}{12} \biggl( \s{\optrel_d} - \sum_{i=1}^{k-1} i\s{\myC^{i}} \biggr) + \s{\myC^{k+}} + \sum_{i=1}^{k-1} (2i+1)\s{\myC^{i}} - 2 \\
    &\,\le\, \frac{37}{12} \s{\optrel_d} + \s{\myC^{k+}}
    \,<\, \left( 3.084 +\frac{1}{k} \right) \s{\optrel_d}.
\end{align*}

\subsection{Analysis: Yellow Relays}

As in Section~\ref{ssec:lower-bounds}, let $\myR$ be the communication graph on the optimal set $\optrel$ of relays (without sensors). In $\myR$ there exists a forest $\myR'$ that makes the clusters interconnected.  Let $R'\subset\optrel$ be the relays that are vertices of $\myR'$. We partition $R'$ into ``black'' relays $\optrel_b = R' \cap \optrel_d$ and ``white'' relays $\optrel_w = R' \cap \optrel_\ell$ -- those inside and outside the clusters, respectively.

Two black relays cannot be adjacent in $\myR'$: if they are in the same cluster, the edge between them is redundant; if they are in different clusters, the distance between them must be larger than $r$, as otherwise our algorithm would have placed two green relays to interconnect the clusters into one. By a similar reasoning, there cannot be a white relay adjacent to 3 or more black relays in $\myR'$, and there cannot be a pair of adjacent white relays such that each of them is adjacent to 2 black relays. Refer to Figure~\ref{fig:forbid}. Finally, the maximum degree of a white relay is~5. Using these observations, we can prove the following lemma.
\begin{figure}\centering
\scalebox{1.2}{\input{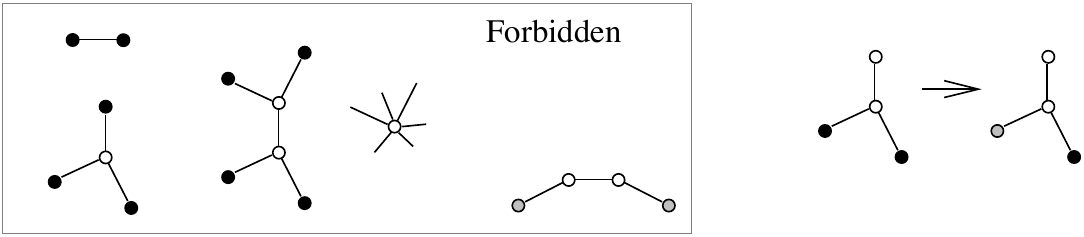_t}}
\caption{Forbidden configurations and grey relays.}\label{fig:forbid}
\end{figure}

\begin{lemma}\label{lem_yellow}
    There is a spanning forest with neighbourhoods on cloud clusters that requires at most
    \[
        \left(\frac{4}{\sqrt{3}} + \frac{4}{5}\right) \s{\optrel_w} < 3.11 \s{\optrel_w}
    \]
    yellow relays on its edges.
\end{lemma}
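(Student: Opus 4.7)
The plan is to exhibit a spanning forest with neighbourhoods on the clusters whose yellow-relay count meets the stated bound. Since $\MSTN'$ minimises the total relay count among such forests (and the green count is fixed at $2$ per edge), exhibiting any witness with few yellow relays suffices.

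The template is $\myR'$ itself. Contracting each cluster to a single point yields a Steiner-like spanning structure on the clusters, with the white relays as Steiner points. By the first observation (no black-black edges), every edge of $\myR'$ has at least one white endpoint, and so lies wholly or partly outside the clusters; and every edge has length at most $r$ by definition of $\myR$. Consequently the total length of this structure outside the clusters is at most $r\cdot m$, where $m = |E(\myR')|$. Applying Lemma~\ref{lem_StRatio} with $\myP$ equal to the set of clusters, a minimum spanning forest with direct cluster-cluster edges has external length at most $(2/\sqrt{3})\,r\,m$, and the number of yellow relays along its edges is at most $(2/\sqrt{3})\,m$ via $\sum_{e}\lfloor\s{e}/r\rfloor \le \text{length}/r$.

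It then remains to bound $m$ in terms of $W = \s{\optrel_w}$. Partition the whites by black-neighbour count as $W = W_0+W_1+W_2$. The listed observations yield $e_{wb} = W_1+2W_2 \le 2W$; since $\myR'$ is a forest, the white-white subgraph is a subforest, so $e_{ww}\le W-1$; the heavy whites $W_2$ form an independent set in this subforest (the second forbidden configuration), forcing them to be leaves attached to non-heavy whites; and the max-degree-5 bound caps how many such leaves each non-heavy white can host. A short LP-style optimisation over $(W_0,W_1,W_2)$ -- using the feasibility inequality (sum of non-heavy white-white degrees $\ge$ sum of leaf attachments plus internal tree edges, i.e.\ $4W_0+3W_1 \ge W-2$) -- produces a bound of the form $m \le cW$ with $c < 2 + 2\sqrt{3}/5$, so that $(2/\sqrt{3})\,m < (4/\sqrt{3} + 4/5)\,W < 3.11\,W$.

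The main obstacle is precisely this structural bound on $m$. The "grey relays" of Figure~\ref{fig:forbid} are expected to play a role in the accounting -- most plausibly representing the non-heavy whites used as attachment hubs for heavy leaves (or, alternatively, marking local transformations that eliminate forbidden configurations without loss). The delicate step is combining the heavy-independence with the max-degree-5 and the forest property: heavy whites are "cheap" in $e_{wb}$ (contributing $2$) but "expensive" in that their indegree into the white-white subforest must be supported by non-heavy hubs of degree at most 5 or 4. Carrying through this book-keeping (possibly via a direct charging scheme rather than an LP) yields the stated constant $4/\sqrt{3}+4/5$.
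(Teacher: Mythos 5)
Your overall strategy (use $\myR'$ as a witness, convert it to a spanning forest with neighbourhoods via Lemma~\ref{lem_StRatio}, and bound the combinatorics of white/black adjacencies) is the same as the paper's, but the specific execution has a genuine gap: applying the Steiner-ratio factor $2/\sqrt{3}$ to \emph{all} $m$ edges of $\myR'$ and then claiming $m \le (2+2\sqrt{3}/5)\s{\optrel_w}$ does not work, because that bound on $m$ is false under the stated constraints. Consider a tree in which ``light'' whites (one black neighbour, white-degree $4$) are joined to one another only through degree-$2$ ``heavy'' whites (two black neighbours), with the remaining heavy whites attached as pendants: with $l$ lights one gets $h=3l+1$ heavies, $e_{ww}=4l=w-1$ and $e_{wb}=2h+l=7l+2$, so $m=11l+2\approx \frac{11}{4}w\approx 2.75\,w$. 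This satisfies every listed observation (no black--black edges, no white with $3$ blacks, heavies independent, degree at most $5$, every white adjacent to a white), yet $(2/\sqrt{3})\cdot\frac{11}{4}\approx 3.18 > 3.11$. So no ``LP-style optimisation'' over $(W_0,W_1,W_2)$ can deliver $m\le 2.693\,w$; your intermediate inequality $4W_0+3W_1\ge W-2$ is also unsubstantiated and would not rescue this.

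The idea you are missing is precisely the role of the grey relays, which you flagged as unclear. The paper does \emph{not} feed all of $\myR'$ into the Steiner-ratio conversion. For each white with two black neighbours it recolours one of them grey, applies Lemma~\ref{lem_StRatio} only to the subtree on the white and \emph{remaining} black relays (at most $b+w\le 2w$ edges, since $b\le w$), and then reconnects each grey relay to its nearest black relay directly at a cost of a \emph{single} yellow relay (the distance is at most $2$). This replaces a cost of $2/\sqrt{3}$ per grey edge by a cost of $1$ per grey relay, giving $(2/\sqrt{3})(b+w)+g \le (2/\sqrt{3}-1)\cdot 2w + (b+g) + w$, which combined with the separate bound $b+g\le \frac{9}{5}w$ (proved from the degree bound and the no grey--white--white--grey condition) yields exactly $(4/\sqrt{3}+4/5)w$. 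On the example above this evaluates to about $3.06\,w$, comfortably under $3.11$, whereas your uniform charging gives $3.18\,w$. Two smaller omissions: you need to first decompose $\myR'$ into edge-disjoint trees induced by maximal connected sets of whites (and treat single-white trees separately) before you may assume every white has a white neighbour, and the claim ``every white is adjacent to another white'' is only valid after that decomposition.
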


\begin{proof}
    Let $\myD$ be the set of cloud clusters.  We partition $\myR'$ into edge-disjoint trees induced by maximal connected subsets of white relays and their adjacent black relays. It is enough to show that for each such tree $T$ that interconnects a subset of clusters $\myD' \subseteq \myD$, there is a spanning forest on $\myD'$ such that the number of yellow relays on its edges is at most $3.11$ times the number of white relays in $T$. As no pair of black relays is adjacent in $\myR'$, these edge-disjoint trees interconnect all clusters in $\myD$. The same holds for the spanning forests, and the lemma follows.

    Trees with only one white relay (and thus exactly two black relays) are trivial: the spanning forest needs only one edge with one yellow relay (and one green in each end). Therefore assume that $T$ contains at least two white relays.

    We introduce yet another colour. For each white relay with two black neighbours, arbitrarily choose one of the black relays and change it into a ``grey'' relay (Figure~\ref{fig:forbid}). Let $w$ be the number of white relays, let $b$ be the number of remaining black relays, and let $g$ be the number of grey relays in~$T$.

    First, we clearly have
    \begin{equation}\label{eq_yellow_bw}
        b \le w.
    \end{equation}
    Second, there is no grey--white--white--grey path, and each white relay is adjacent to another white relay. Therefore the ratio $(b+g)/w$ is at most $9/5$. To see this, let $w_2$ be the number of white relays with a grey and a black neighbour, let $w_1$ be the number of white relays with a black neighbour but no grey neighbour, and let $w_0$ be the number of white relays without a black neighbour. By degree bound, $w_2 \le 4 w_1 + 5 w_0 = 4 w_1 + 5 (w - w_2 - w_1)$; therefore $5w \ge 6 w_2 + w_1$. We also know that $w \ge w_2 + w_1$. Therefore
\begin{equation}\label{eq_yellow_95wbg}
    \frac95 w
    \,\ge\, \frac15 (6 w_2 + w_1) + \frac45 (w_2 + w_1)
    \,=\, (w_2 + w_1) + w_2
    \,=\, b + g.
\end{equation}
(The worst case is a star of $1+4$ white relays, $5$ black relays and $4$ grey relays.)

    Now consider the subtree induced by the black and white relays. It has fewer than $b+w$ edges, and the edge length is at most $r$. By Lemma~\ref{lem_StRatio}, there is a spanning forest on the black relays with total length less than ${(2/\sqrt{3})(b+w)r}$; thus we need fewer than ${(2/\sqrt{3})(b+w)}$ yellow relays on the edges.

    Now each pair of black relays in $T$ is connected. It is enough to connect each grey relay to the nearest black relay: the distance is at most $2$, and one yellow relay is enough. In summary, the total number of yellow relays is less than
    \[
        \begin{split}
        \frac{2}{\sqrt{3}} (b+w) + g
        &\,=\, \left(\frac{2}{\sqrt{3}} - 1\right) (b+w) + b+g + w \\
        &\,\le\, \left(\frac{2}{\sqrt{3}} - 1\right) 2 w + \frac{9}{5} w + w
        \,=\, \left(\frac{4}{\sqrt{3}} + \frac45 \right)w
        \,<\, 3.11 w .
        \end{split}
    \]
    The inequality follows from \eqref{eq_yellow_bw} and \eqref{eq_yellow_95wbg}.
\end{proof}

Thus, $\s{A_y} < 3.11 \s{\optrel_w} \le 3.11 \s{\optrel_\ell}$, and the overall approximation ratio of our algorithm is less than $3.11$.


\section{Inapproximability of One-Tier Relay Placement}\label{sec_inapx1tier}

We have improved the best known approximation ratio for one-tier relay placement from~7 to~3.11. A natural question to pose at this point is whether we could make the approximation ratio as close to 1 as we wish. In this section, we show that no PTAS exists, unless P${}={}$NP.
\begin{theorem}\label{thm:inapx}
    It is NP-hard to approximate one-tier relay placement within factor\/ $1 + 1/687$.
\end{theorem}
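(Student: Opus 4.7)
The plan is a gap-preserving reduction from a source problem that is already known to be NP-hard to approximate within some explicit constant greater than~$1$. A natural candidate is Minimum Vertex Cover on bounded-degree (say, cubic) graphs, which is APX-hard with a known inapproximability threshold; alternatively one may reduce from a bounded-occurrence version of MAX-3SAT. I will embed a graph instance $G=(V,E)$ in the plane as a collection of sensors and choose the relay range $r\ge 1$ (part of the input, by assumption) so that the combinatorial ``selection'' of vertices into the cover corresponds exactly to a choice of where to place a small, bounded number of relays.

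I would first design two kinds of \emph{gadgets}. A \emph{vertex gadget} is a small sensor cluster placed so that the cluster can be linked to the outside world in exactly one of two canonical ways, a cheap way (costing a fixed number of relays and corresponding to ``this vertex is taken into the cover'') and an expensive way (costing one additional relay), with the two options offset by a single relay. An \emph{edge gadget} is a chain of sensors connecting the two vertex gadgets of its endpoints, engineered so that: (i) the chain, taken in isolation, already needs a predictable number of relays depending only on its length and on $r$; and (ii) the chain becomes connected to the rest of the network only if at least one of its endpoint vertex gadgets is in its ``cheap/taken'' state. Together, these encode the covering constraint in the relay-placement problem.

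Next, I would prove a normalisation or local-exchange lemma: any feasible relay solution can be rewritten, without increasing its cardinality, into one in which all relays lie at canonical positions inside the gadgets, so that every feasible solution encodes a vertex cover of $G$. With $r$ chosen appropriately, the chains are tight enough that relays cannot be ``shared'' between different gadgets to cheat. After normalisation, the total number of relays equals $N(G)+c\cdot|C|$, where $N(G)$ depends only on $G$ and $|C|$ is the size of the encoded cover. The inapproximability constant $1+1/687$ then falls out of the arithmetic: if $\tau^*$ denotes the optimal cover size and $\epsilon_0$ the known hardness gap for the source problem, one tunes the gadget parameters so that $(1+1/687)(N(G)+c\tau^*) \ge N(G)+c(1+\epsilon_0)\tau^*$, contradicting the source hardness assumption.

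The main obstacle, as with most geometric hardness proofs, is the normalisation step. One has to rule out ``mixed'' configurations in which a single relay, placed near the boundary of several gadgets, serves purposes in more than one of them and thereby allows the solution to beat the clean combinatorial bound. This is controlled by pushing the gadgets sufficiently far apart (scaling everything by a large constant) and choosing $r$ to match, so that any inter-gadget reuse of relays is provably paid for by extra relays elsewhere. The constant $687$ is a consequence of the overhead $N(G)/(c\tau^*)$ in the reduction and the known explicit hardness constant $\epsilon_0$ for bounded-degree Vertex Cover; once the gadgets are pinned down, verifying that $(1+1/687)$ is the correct surviving ratio is a routine calculation.
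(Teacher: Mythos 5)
Your overall strategy coincides with the paper's: a gap-preserving reduction from Vertex Cover on bounded-degree graphs (the paper uses the Berman--Karpinski gap instances with $\Delta=4$, $A=152$, $B=78$, $C=79$, whence $687 = B+\Delta A+1$), with vertex gadgets, per-edge gadgets, a normalisation lemma showing every feasible solution can be rewritten into a canonical form encoding a cover, and a final counting identity of the form ``$k+2\s{E}+1$ relays iff a cover of size $k$''. So the plan is the right one and the source of the constant is correctly diagnosed.

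There is, however, one step that would fail as you describe it: the normalisation. You propose to rule out relays being shared between gadgets by ``pushing the gadgets sufficiently far apart (scaling everything by a large constant) and choosing $r$ to match.'' Constant-factor separation is not enough: a solution could still spend its $k$ non-forced relays on a chain that bridges two gadget regions (each relay--relay hop covers distance $r$), and ruling out all such mixed configurations locally is exactly the hard part. The paper instead makes the separation grow with the instance: it sets $r=16(\s{V}+1)$ and places each gadget inside a disk of diameter $r$, with distinct disks separated by more than $\s{V}\,r$. Since a canonical solution uses only $k<\s{V}$ free relays, no chain of free relays can span the gap between two disks, so every connected group of relays can be retracted into a single disk and then snapped to the finitely many good locations. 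This is also why the theorem is stated for $r$ being part of the input (the paper's remark makes this explicit); with $r$ a fixed constant your scaling argument has no analogue and the question is open. A second, smaller issue: your edge gadget is a sensor chain joining the two vertex gadgets, but for a nonplanar bounded-degree graph such chains must cross, and crossing sensor chains lie within unit distance of each other and hence merge into a single blob, destroying the intended semantics. The paper avoids this by giving each edge its own far-away crossover gadget containing two \emph{isolated} sensors (which force the $2\s{E}$ ``bound'' relays) joined to the two vertex gadgets by non-crossing tentacles kept at pairwise distance at least $3$.
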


The reduction is from minimum vertex cover in graphs of bounded degree. Let $\myG = (V,E)$ be an instance of vertex cover; let $\Delta \le 5$ be the maximum degree of $\myG$. We construct an instance $\myI$ of the relay placement problem that has a feasible solution with\/ $k + 2\s{E} + 1$ relays if and only if $\myG$ has a vertex cover of size~$k$.

Figure~\ref{fig:inapx} illustrates the construction. Figure~\ref{fig:inapx}a shows the \emph{vertex gadget}; we have one such gadget for each vertex $v \in V$. Figure~\ref{fig:inapx}b shows the \emph{crossover gadget}; we have one such gadget for each edge $e \in E$. Small dots are sensors in the relay placement instance; each solid edge has length at most $1$. White boxes are \emph{good locations} for relays; there is one good location in each vertex gadget, and two good locations per crossover gadget. Dashed line shows a connection for relays in good locations in a crossover.

\begin{figure}[t]
    \centering
    \scalebox{0.9}{\input{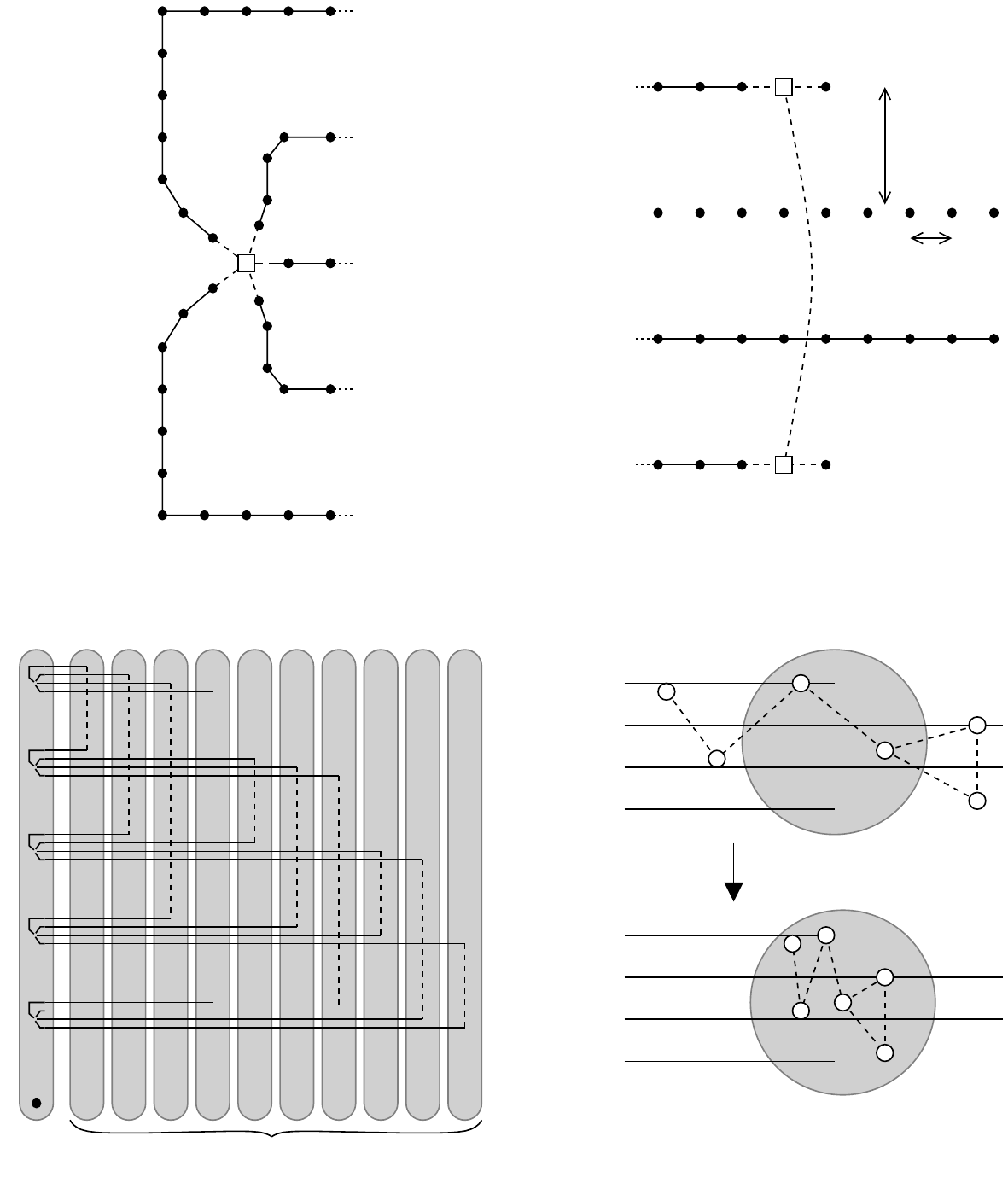_t}}
    \caption{(a)~Vertex gadget for $v \in V$. (b)~Crossover gadget for $\{v,u\} \in E$. (c)~Reduction for $K_5$. (d)~Normalising a solution, step~1.}\label{fig:inapx}
\end{figure}

We set $r = 16(\s{V}+1)$, and we choose $\s{E}+1$ disks of diameter $r$ such that each pair of these disks is separated by a distance larger than $\s{V} r$ but at most $\poly(\s{V})$. One of the disks is called $S(0)$ and the rest are $S(e)$ for $e \in E$. All vertex gadgets and one isolated sensor, called $p_0$, are placed within disk $S(0)$. The crossover gadget for edge $e$ is placed within disk $S(e)$. There are noncrossing paths of sensors that connect the crossover gadget $e = \{u,v\} \in E$ to the vertex gadgets $u$ and $v$; all such paths (\emph{tentacles}) are separated by a distance at least~$3$. Good relay locations and $p_0$ cannot be closer than $1$ unit to a disk boundary.

Figure~\ref{fig:inapx}c is a schematic illustration of the overall construction in the case of $\myG = K_5$; the figure is highly condensed in $x$ direction. There are $11$ disks. Disk $S(0)$ contains one isolated sensor and $5$ vertex gadgets. Each disk $S(e)$ contains one crossover gadget. Outside these disks we have only parts of tentacles.

There are $4 \s{E} + 1$ blobs in $\myI$. The isolated sensor $p_0$ forms one blob. For each edge there are 4 blobs: two tentacles from vertex gadgets to the crossover gadget, and two isolated sensors in the crossover gadget.

Theorem~\ref{thm:inapx} now follows from the following two lemmata.

\begin{lemma}\label{lem:inapx-a}
    Let\/ $C$ be a vertex cover of\/ $\myG$. Then there is a feasible solution to relay placement problem\/ $\myI$ with\/ $\s{C} + 2\s{E} + 1$ relays.
\end{lemma}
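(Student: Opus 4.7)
The plan is to exhibit a feasible placement of exactly $\s{C} + 2\s{E} + 1$ relays. Given the vertex cover $C \subseteq V$, I would put one relay at the unique good location of the vertex gadget for each $v \in C$ (giving $\s{C}$ relays), two relays at the pair of good locations of the crossover gadget for each $e \in E$ (giving $2\s{E}$ relays), and one relay coincident with the isolated sensor $p_0$ (giving $1$). The total is $\s{C}+2\s{E}+1$, matching the bound.

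To verify feasibility I would show that in the resulting communication graph all sensors and relays lie in a single connected component. Three geometric facts from the construction do the work. First, every relay I place inside $S(0)$---namely the $\s{C}$ vertex-gadget relays and the $p_0$-relay---lies pairwise within distance $r$, because $S(0)$ has diameter $r$; thus these relays form a clique in the relay communication graph, and the $p_0$-relay in particular covers the isolated blob $\{p_0\}$. Second, the two relays placed in each crossover disk $S(e)$ lie in a common disk of diameter $r$, so they communicate with each other (the dashed edge in Figure~\ref{fig:inapx}b), and by the design of the crossover good locations they are within unit distance of the two isolated sensors of the crossover and of the two tentacle endpoints entering the crossover. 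So the four blobs attached to $e$---the two tentacles and the two isolated sensors---merge into a single component. Third, the good location of a vertex gadget $v$ is within unit distance of the first sensor of every tentacle emanating from $v$, so for $v \in C$ the relay placed there merges all tentacles incident to $v$ with the $S(0)$-clique.

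The last step is to weld the per-edge crossover components to the $S(0)$-clique, and this is where the vertex-cover hypothesis is essential: for each edge $e = \{u,v\} \in E$, at least one endpoint---say $u$---belongs to $C$, and therefore the tentacle from $u$ to the crossover of $e$ is attached at its vertex end to the relay at $u$ (inside $S(0)$) and at its crossover end to the two relays of $e$. Consequently every crossover component gets welded to the $S(0)$-clique, and since every blob of $\myI$ is either $\{p_0\}$, a tentacle blob, or an isolated sensor of some crossover gadget, all blobs end up in one component and the placement is feasible. I do not foresee a serious obstacle here; the only thing to verify with care is the quantitative claim that all intra-disk distances between the chosen good locations are at most $r$ and that good-location-to-adjacent-sensor distances are at most $1$, and both are built into the gadget construction.
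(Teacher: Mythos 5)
Your placement is exactly the paper's: one relay at the good location of each vertex gadget in $C$, two at the good locations of each crossover gadget, and one at $p_0$. The paper states this placement and leaves feasibility implicit in the gadget construction, whereas you spell out the connectivity verification (the $S(0)$-clique, the crossover welding, and the role of the cover property); this is the same proof, just with the routine checking made explicit.
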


\begin{proof}
    For each $v \in C$, place one relay at the good location of the vertex gadget~$v$. For each $e \in E$, place two relays at the good locations of the crossover gadget~$e$. Place one relay at the isolated sensor~$p_0$.
\end{proof}

\begin{lemma}\label{lem:inapx-b}
    Assume that there exists a feasible solution to relay placement problem\/ $\myI$ with\/ $k + 2\s{E} + 1$ relays. Then\/ $\myG$ has a vertex cover of size at most\/ $k$.
\end{lemma}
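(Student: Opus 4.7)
The plan is to transform the feasible solution into a canonical form and then read off a vertex cover of $\myG$ of size at most~$k$.

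\emph{Normalization and mandatory relays.} Each crossover disk $S(e)$ contains two isolated-sensor blobs, so any feasible solution contains at least two relays within distance~$1$ of sensors of $S(e)$; without increasing the total count, we may assume these two relays sit at the two good locations of the gadget (they stab all four blobs of $S(e)$, and being within a disk of diameter~$r$ they communicate with each other). Similarly, one relay may be assumed to sit at $p_0$. These $2\s{E}+1$ relays are called \emph{mandatory}, the remaining (at most $k$) relays are called \emph{free}.

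\emph{Routing between the disks.} For each edge $e=\{u,v\}\in E$, the sensors of $S(e)$ must be connected to those of $S(0)$. Because the two disks are more than $\s{V}r$ apart and each relay-to-relay hop is at most~$r$, this connection must be realized by either (i)~one of the two tentacles of~$e$ ``stitched'' by a free relay within distance~$1$ of its vertex-gadget-end sensor, or (ii)~a chain of at least $\s{V}-1$ free relays placed in the gap between $S(0)$ and $S(e)$. In case~(i) a further normalization step (Figure~\ref{fig:inapx}(d)) moves the stitching relay to the good location of the corresponding vertex gadget $u$ or $v$.

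\emph{Extracting the vertex cover.} Let $C \subseteq V$ be the set of vertices whose good locations hold a free relay, and let $E' \subseteq E$ be the set of edges whose connection uses option~(ii). Adding an arbitrary endpoint of each edge of $E'$ to $C$ yields a set $C'$ that is a vertex cover of $\myG$: any edge in $E'$ has an endpoint in $C'$ by construction, and any edge not in $E'$ uses option~(i) and therefore has an endpoint in $C$. Each vertex of $C$ is charged to at least one free relay and each edge of $E'$ to at least $\s{V}-1$ free relays, so
\[
    \s{C'} \,\le\, \s{C} + \s{E'} \,\le\, \s{C} + (\s{V}-1)\s{E'} \,\le\, k,
\]
assuming $\s{V}\ge 2$ (the case $\s{V}\le 1$ is trivial).

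\emph{Main obstacle.} The technical heart is making the normalization rigorous: one must show that every relay placed off the canonical set of positions (a good location, $p_0$, or a point serving a tentacle or chain) can be relocated without breaking feasibility or increasing the total count. The pairwise separation $>\s{V}r$ and the choice $r=16(\s{V}+1)$ are calibrated precisely to make option~(ii) so expensive that the counting inequality above closes regardless of how many chain-connections the solution uses.
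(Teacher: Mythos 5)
Your overall strategy---normalise the solution to canonical positions and then read off a cover---is the same as the paper's, but you miss the one observation that makes the paper's proof close cleanly: the case $k \ge \s{V}$ is trivial (take $C = V$), so one may assume $k < \s{V}$, and then your option~(ii) is \emph{impossible}, not merely expensive. A relay chain bridging a gap of width more than $\s{V} r$ with hops of length at most $r$ needs more than $\s{V}-1 > k$ free relays, i.e., more free relays than the entire solution contains. This eliminates your set $E'$ altogether, and with it the charging argument. That matters because the charging, as written, has a gap: you charge each edge of $E'$ to ``at least $\s{V}-1$'' free relays, but chains to different crossover disks may share relays (branching in a Steiner-tree fashion), so the charges need not be disjoint and $\s{C} + (\s{V}-1)\s{E'} \le k$ does not follow as stated; repairing it requires a further argument using the pairwise separation of \emph{all} the disks, which the paper's route renders unnecessary.

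The second issue is the normalisation itself, which you correctly flag as the technical heart but leave undone, and your dichotomy (i)/(ii) is not exhaustive even after chains are excluded. A free relay that keeps the network connected need not lie within distance~$1$ of a tentacle's vertex-gadget end: it can be adjacent to the tentacle anywhere inside $S(0)$, or it can sit inside a crossover disk $S(e)$ as a cut vertex adjacent to a single blob. The paper handles these with three explicit steps: (1)~since no path of at most $k$ free relays can span the gap between disks, every connected set of relays can be shifted to lie within a single disk, inside which all relays mutually communicate, so no relay--relay link breaks; (2)~bound relays snap to $p_0$ and to the good locations, and free relays in $S(0)$ move to the good location of a suitable vertex gadget; (3)~a free relay stranded in some $S(e)$ whose removal separates a component $B$ from $p_0$ is moved to the good location of a vertex gadget containing sensors of $B$, which restores connectivity at no extra cost. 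Step~(3) in particular covers a configuration your case analysis does not address.
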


\begin{proof}
    If $k \ge \s{V}$, then the claim is trivial: $C = V$ is a vertex cover of size at most $k$. We therefore focus on the case $k < \s{V}$.

    Let $R$ be a solution with $k + 2\s{E} + 1$ relays. We transform the solution into a canonical form $R'$ of the same size and with the following additional constraints: there is a subset $C \subseteq V$ such that at least one relay is placed at the good relay location of each vertex gadget $v \in C$; two relays are placed at the good locations of each crossover gadget; one relay is placed at $p_0$; and there are no other relays. If $R'$ is a feasible solution, then $C$ is a vertex cover of $\myG$ with $\s{C} \le k$.

    Now we show how to construct the canonical form $R'$. We observe that there are $2\s{E}+1$ isolated sensors in $\myI$: sensor $p_0$ and two sensors for each crossover gadget. In the feasible solution $R$, for each isolated sensor $p$, we can always identify one relay within distance $1$ from $p$ (if there are several relays, pick one arbitrarily). These relays are called \emph{bound relays}. The remaining $k < \s{V}$ relays are called \emph{free relays}.

    \emph{Step~1.} Consider the communication graph formed by the sensors in $\myI$ and the relays $R$. Since each pair of disks $S(i)$, $i \in \{0\} \cup E$, is separated by a distance larger than $\s{V} r$, we know that there is no path that extends from one disk to another and consists of at most $k$ free relays (and possibly one bound relay in each end). Therefore we can shift each connected set of relays so that it is located within one disk (see Figure~\ref{fig:inapx}d). While doing so, we do not break any relay--relay links: all relays within the same disk can communicate with each other. We can also maintain each relay--blob link intact.

    \emph{Step~2.} Now we have a clique formed by a set of relays within each disk $S(i)$, there are no other relays, and the network is connected. We move the bound relay in $S(0)$ so that it is located exactly on $p_0$. For each $e \in E$, we move the bound relays in $S(e)$ so that they are located exactly on the good relay locations. Finally, any free relays in $S(0)$ can be moved to a good relay location of a suitable vertex gadget. These changes may introduce new relay--blob links but they do not break any existing relay--blob or relay--relay links.

    \emph{Step~3.} What remains is that some disks $S(e)$, $e \in E$, may contain free relays. Let $x$ be one of these relays. If $x$ can be removed without breaking connectivity, we can move $x$ to the good relay location of any vertex gadget. Otherwise $x$ is adjacent to exactly one blob of sensors, and removing it breaks the network into two connected components: component~$A$, which contains $p_0$, and component~$B$. Now we simply pick a vertex $v \in V$ such that the vertex gadget $v$ contains sensors from component $B$, and we move $x$ to the good relay location of this vertex gadget; this ensures connectivity between $p_0$ and $B$.
\end{proof}

\begin{proof}[Proof of Theorem~\ref{thm:inapx}.]
Let $\Delta, A, B, C \in \mathbb{N}$, with $\Delta \le 5$ and $C > B$. Assume that there is a factor
\[
    \alpha \,=\, 1 + \frac{C-B}{B + \Delta A + 1}
\]
approximation algorithm $\myA$ for relay placement. We show how to use $\myA$ to solve the following \emph{gap-vertex-cover} problem for some $0 < \eps < 1/2$: given a graph $\myG$ with $A n$ nodes and maximum degree $\Delta$, decide whether the minimum vertex cover of $\myG$ is smaller than $(B+\eps)n$ or larger than $(C-\eps)n$.

If $n < 2$, the claim is trivial. Otherwise we can choose a positive constant $\eps$ such that
\[
    \alpha - 1 \,<\, \frac{C-B-2\eps}{B +\eps + \Delta A+1/n}
\]
for any $n \ge 2$. Construct the relay placement instance $\myI$ as described above.

If minimum vertex cover of $\myG$ is smaller than $(B+\eps)n$, then by Lemma~\ref{lem:inapx-a}, the algorithm $\myA$ returns a solution with at most
$b = {\alpha ((B+\eps)n + 2\s{E} + 1)}$
relays. If minimum vertex cover of $\myG$ is larger than $(C-\eps)n$, then by Lemma~\ref{lem:inapx-b}, the algorithm $\myA$ returns a solution with at least
$c = (C-\eps)n + 2\s{E} + 1$
relays. As $2\s{E} \le \Delta A n$, we have
\[
    \begin{split}
    c - b
    &\,\ge\, (C-\eps)n + 2\s{E} + 1 - \alpha \bigl((B+\eps)n + 2\s{E} + 1\bigr) \\
    &\,\ge\, \bigl(C - B - 2 \eps - (\alpha-1)(B+ \eps + \Delta A + 1/n)\bigr) n
    \,>\, 0,
    \end{split}
\]
which shows that we can solve the gap-vertex-cover problem in polynomial time.

For $\Delta = 4$, $A = 152$, $B = 78$, $C = 79$, and any $0 < \eps < 1/2$, the gap-vertex-cover problem is NP-hard \cite[Theorem~3]{berman99some}.
\end{proof}

\begin{remark}
    We remind the reader that throughout this work we assume that radius $r$ is part of the problem instance. Our proof of Theorem~\ref{thm:inapx} heavily relies on this fact; in our reduction, $r = \Theta(\s{V})$. It is an open question whether one-tier relay placement admits a PTAS for a small, e.g., constant,~$r$.
\end{remark}


\section{A PTAS for Two-Tier Relay Placement}\label{sec_ptas}

In the previous sections we studied one-tier relay placement, in which
the sensors could communicate with each other, as well as with the
relays.  We gave a 3.11-approximation algorithm, and showed that the
problem admits no PTAS (for general $r$).  In this section we turn to
the two-tier version, in which the sensors cannot communicate with
each other, but only with relays.

The two-tier relay placement problem asks that we determine a set $R$
of relays such that there exists a tree $T$ whose internal nodes are
the set $R$ and whose leaves are the $n$ input points (sensors) $V$,
with every edge of $T$ between two relays having length at most $r$
and every edge of $T$ between a relay and a leaf (sensor) having
length at most 1.

We give a PTAS for this version of the problem, summarized in the
following theorem.

\begin{theorem}\label{thm:ptas}
The two-tier relay placement problem has a PTAS.
\end{theorem}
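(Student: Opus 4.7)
The plan is to prove Theorem~\ref{thm:ptas} via a shifted-dissection PTAS in the style of Arora and Mitchell, adapted so that ``cost'' refers to the number of relays rather than total edge length. Normalise so that $r\ge 1$, and enclose all sensors in an axis-aligned bounding square of side~$L$. No useful relay can lie farther than $O(r)$ from the convex hull of the sensors (such a relay could be removed without breaking connectivity of the tree), so $L=\poly(n)$. Set up the standard randomly-shifted hierarchical dissection of the square into axis-aligned cells of geometrically decreasing side length, terminating at a base cell size $\Theta(r)$, and plant $m=O(\log(L/r)/\eps)$ equally-spaced \emph{portals} on every cell boundary.

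\textbf{Structural theorem.} The key step is to show that for a uniformly random shift of the dissection there is a feasible solution of size at most $(1+\eps)\s{\optrel}$ whose certifying tree~$T$ (with relays as internal nodes and sensors as leaves) crosses each cell boundary only at portals and at most $m$ times per boundary. Starting from an optimal tree $T^{*}$, apply an Arora-style patching lemma on a per-boundary, per-level basis: each crossing is slid to the nearest portal, and any excess crossings on a boundary are merged into at most $m$ portal crossings. A single reroute of length $\ell$ adds at most $\lceil \ell/r\rceil$ new relays. Using that the total length of the relay-backbone of $T^{*}$ is at most $r\s{\optrel}$, the standard random-shift averaging, summed over the $O(\log(L/r))$ dissection levels, shows that the expected rerouting cost is at most $\eps\s{\optrel}$ additional relays.

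\textbf{Dynamic program.} For each cell $Q$ and each \emph{interface} consisting of (i)~a subset of the $m$ portals along $\partial Q$ that are active and (ii)~a non-crossing partition of them into connectivity classes, compute the minimum number of relays to place inside $Q$ that cover every sensor inside $Q$ and realise the prescribed portal pattern. Children cells are combined into their parent by matching portal states on shared boundaries and uniting the connectivity classes. Base cells of side $\Theta(r)$ are handled by brute enumeration: an optimal local cover requires only $O(1)$ relays, and one may restrict their locations to a sufficiently fine $\eps$-grid by a standard perturbation argument. The number of interfaces per boundary is $2^{O(m)}$ times a Catalan-like count, giving $n^{O(1/\eps)}$ subproblems and the same order of running time. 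The DP outputs a succinct representation, as required: a list of explicit relay points together with line segments along which relays are spaced every $r$ units.

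\textbf{Main obstacle.} The technical heart lies in the structural theorem, and the novelty relative to the Euclidean Steiner-tree PTAS is that here the cost is a count of relays rather than a length, so even a very short reroute can cost a whole new relay. The remedy is amortisation: a reroute of length $\ell\ll r$ that is forced to add one relay is charged to a length-$r$ segment of $T^{*}$ (which is guaranteed to exist since successive relays on $T^{*}$ are at distance at most~$r$), and the random-shift averaging then turns this amortisation into an $\eps$-factor loss in the relay count. A secondary technicality is handling sensors that lie near cell boundaries: under a random shift a fixed sensor lies strictly inside its cell almost surely, and the measure-zero bad events are removed by an infinitesimal perturbation of the shift. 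Combining the structural theorem with the dynamic program yields a $(1+\eps)$-approximation in time polynomial in $n$ for each fixed $\eps>0$, proving Theorem~\ref{thm:ptas}.
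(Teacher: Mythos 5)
Your overall plan (dissection-based DP with a structural theorem bounding the cost of making the solution conform to the decomposition) is in the same family as the paper's, which uses Mitchell's $m$-guillotine framework with ``relay-dense bridges'' rather than Arora-style portals. However, there are two genuine gaps, both at points you dismiss as ``standard''.

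The first and most serious is the discretization of relay locations. You propose to restrict relays ``to a sufficiently fine $\eps$-grid by a standard perturbation argument''. That argument does not go through here: the constraints are hard thresholds (a red edge must have length at most $1$, a blue edge at most $r$), the cost is a cardinality, and an optimal solution can have many edges that are exactly taut. Perturbing a relay by any $\delta>0$ can break such an edge, and the only way to repair a broken link is to insert a whole new relay; since a single relay may be tautly linked to up to five sensors and several relays, and since the number of taut edges can be $\Theta(\s{\optrel})$ or even $\Theta(n)\gg\s{\optrel}$, the repair cost is not bounded by $\eps\s{\optrel}$. There is no length slack to amortise against, which is precisely why the paper cannot use a uniform grid and instead constructs an \emph{iterated circle arrangement} (circles of radii $1,r,r+1,\dots,mr,mr+1$ around previously constructed points, iterated $m$ times) and proves Lemma~\ref{lem_rounding} by a ``taut-cable'' argument: subtrees are translated and slid along circles until some internal node is forced onto an arrangement vertex, without ever lengthening an edge past its bound. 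This is the technical heart of the dense case and your proposal is missing it.

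The second gap is your claim that $L=\poly(n)$ (and hence that the dissection has $O(\log(L/r))$ levels with $m=O(\log(L/r)/\eps)$ portals giving $n^{O(1/\eps)}$ states). Bounding relays to within $O(r)$ of the hull bounds nothing useful: $\diam(V)/r$ can be exponential in the input size in bits, so the number of relays, the number of portals per side, and the $2^{O(m)}$ interface count all blow up. The paper handles this by splitting into a \emph{sparse} case ($\diam(V)\ge mnr$), solved by snapping to a coarse grid of spacing $\diam(V)/(nm)$ and reducing to Euclidean Steiner tree with a succinct chain output, and a \emph{dense} case ($\diam(V)<mnr$), where the grid genuinely has polynomially many cells. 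Your proposal needs an analogous case split (or an initial rescaling argument that survives the discretization problem above) before the dissection machinery can be applied.
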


We give an overview of the method here; details and proofs appear in
the Appendix. Let $m$ be a (sufficiently large) positive integer constant; we will give a $(1+O(1/m))$-approximate solution. We distinguish between two cases: the \emph{sparse} case in which $\diam(V) \ge mnr$,
and the \emph{dense} case, in which $\diam(V) < mnr$.

In the sparse case, a solution can consist of long chains of relays,
with a number of relays not bounded by a polynomial in $n$; thus, we
output a succinct representation of such chains, specifying the
endpoints (which come from a regular grid of candidate locations).
The algorithm, then, is a straightforward reduction to the Euclidean
minimum Steiner tree problem.  See Appendix~\ref{sec_ptas_sparse}.

In the dense case, we compute and output an explicit solution.  In
this case, the set of possible locations of relays that we need to
consider is potentially large (but polynomial); we employ an
``iterated circle arrangement'' to construct the set, $G$, of
candidate locations.  Analysis of this set of candidates is done in
Appendix~\ref{sec_lem_rounding}, where we prove the structure lemma,
Lemma~\ref{lem_rounding}.  Armed with a discrete candidate set, we
then employ the $m$-guillotine method \cite{mitchell99guillotine} to
give a PTAS for computing a set of relays (a subset of $G$) that is
within factor $1+O(1/m)$ of being a minimum-cardinality set.  
The main idea is to optimize over the class of ``m-guillotine solutions'', which can be done using dynamic programming. An $m$-guillotine solution has a recursive property determined by ``guillotine cuts'' of the bounding box of the optimal solution (axis-parallel cuts of constant ($O(m)$) description complexity). We prove that an optimal solution that uses $k^*$ relays can be augmented with a set of $O(k^*/m)$ additional relays so that it has the $m$-guillotine property.


\section{Discussion}

In Section~\ref{sec_apx1tier} we presented a simple $O(n\log n)$-time 6.73-approximation algorithm for the one-tier relay placement.  If one is willing to spend more time finding the approximation to the set cover, one may use the semi-local optimisation framework of \citet{duh97approximation}, which provides an approximation ratio of $1+1/2+1/3+1/4+1/5-1/2$ for the set cover with at most 5 elements per set; hence we obtain a 5.73-approximation.

One can form a bipartite graph on the blobs and candidate stab locations as follows.  Pick a point within each maximal-depth cell of the arrangement of the blobs (maximal w.r.t\ the blobs that contain the cell); call these points ``red''.  Pick a point within each blob; call these points ``blue''.  Connect each blue point to the red points contained in the blob, represented by the blue point.  It is possible to pick the points so that the bipartite graph on the points is planar.  Then the stab placement is equivalent to the Planar Red/Blue Dominating Set Problem \cite{downey99parameterized} -- find fewest red vertices that dominate all blue ones.  We believe that the techniques of \citet{baker94approximation} can be used to give a PTAS for the problem.  Combined with the simple algorithm in Section~\ref{sec_apx1tier}, this would result in a $4.16$-approximation for the relay placement.

A more involved geometric argument may improve the analysis of yellow relays in Section~\ref{sec_apx1tierim}, bringing the constant $3.11$ in Lemma~\ref{lem_yellow} down to~$3$, which would improve the  approximation factor to $3.09$. Combining this with the possible PTAS for the Planar Red/Blue Dominating Set would yield an approximation factor of $3+\varepsilon$. We believe that a different, integrated method would be needed for getting below $3$: various steps in our estimates are tight with respect to~$3$. In particular, as the example in Figure~\ref{fig:worstcase} shows, our algorithm may find a solution with (almost) $3$ times more relays than the optimum.
\begin{figure}[h]
\centering
\input{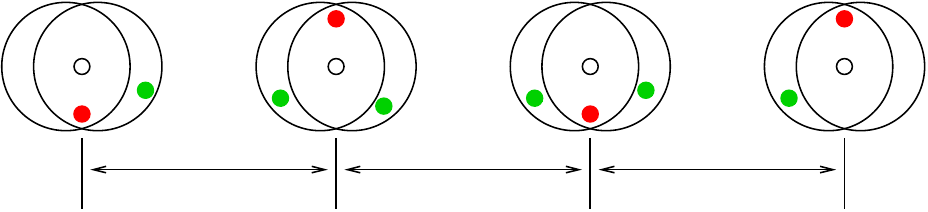_t}
\caption{Unit disks centered on the sensors are shown. The optimum has $1$ relay per blob (hollow circles). Our algorithm may place $1$ red relay in every blob plus $2$ green relays in (almost every) blob.}\label{fig:worstcase}
\end{figure}


\section*{Acknowledgments}

We thank Guoliang Xue for suggesting the problem to us and for fruitful discussions, and Marja Hassinen for comments and discussions.  We thank the anonymous referees for their helpful suggestions.  

A preliminary version of this work appeared in the \emph{Proceedings of the 16th European Symposium on Algorithms} (ESA 2008) \cite{efrat08improved}.

Parts of this research were conducted at the Dagstuhl research center. AE is supported by NSF CAREER Grant 0348000. Work by SF was conducted as part of EU project  FRONTS (FP7, project 215270.)   JM is partially supported by grants from the National Science Foundation (CCF-0431030, CCF-0528209, CCF-0729019, CCF-1018388, CCF-1540890), NASA Ames, Metron Aviation, and Sandia National Labs.  JS is supported in part by the Academy of Finland grant 116547,  Helsinki Graduate School in Computer Science and Engineering (Hecse), and the Foundation of Nokia Corporation.  

\bibliographystyle{plainnat}
\bibliography{full}

\newpage
\appendix
\section{Appendix}

Here, we give details of the PTAS that was outlined in Section~\ref{sec_ptas}.

\subsection{Preliminaries}

Let $m$ be a (sufficiently large) positive integer constant. We present an algorithm that computes an approximate solution to the two-tier relay placement problem, with approximation factor $1+O(1/m)$ and running time that is polynomial in $n = \s{V}$. We focus on nontrivial instances in which at least $2$ relays are required (i.e., there is no single point that is within distance $1$ from all sensors).

We rephrase the statement of the two-tier relay placement problem:
Determine a set $R$ of relays such that there exists a tree $T$ whose
internal nodes are the set $R$ and whose leaves are the $n$ input
points (sensors) $V$, with every edge of $T$ between two relays having
length at most $r$ and every edge of $T$ between a relay and a leaf
(sensor) having length at most 1.  We refer to edges of length at most
$r$ between two relays as {\em blue} edges and edges of length at most
1 between a relay and a sensor as {\em red} edges; we refer to relays
that are incident on red edges as {\em red relays} and all other
relays as {\em blue relays}.  Our objective is to minimize $\s{R}$.

Let $D = \diam(V) - 1$. We say that an instance of the two-tier relay placement problem is \emph{sparse} (with respect to the constant $m$) if $D \ge mnr$, and \emph{dense} if $D < mnr$.  We treat sparse and dense instances with different techniques:
\begin{enumerate}[label=(\roman*)]
    \item A solution of a sparse instance typically consists of long chains of relays. The number of relays is not necessarily polynomial in $n$, and the algorithm must output a succinct representation that specifies the endpoints of such chains. However, the set of possible endpoints that we need to consider is relatively small, and we can use a simple regular grid to construct the set of candidate locations.
    \item A solution of a dense instance can be given explicitly. However, the set of possible locations of relays that we need to consider is large, and we must use an iterated circle arrangement to construct the set of candidate locations.
\end{enumerate}

We first show how to solve sparse instances in Section~\ref{sec_ptas_sparse}; the algorithm is a straightforward reduction to the Euclidean minimum Steiner tree problem.

\subsection{Sparse Instances}\label{sec_ptas_sparse}

We first discretize the problem by using a regular grid. Let $s = D/(nm)$; note that $s \ge r \ge 1$. Let $G$ be the regular square grid of spacing $s$ that covers the bounding rectangle of $V$; the number of points in $G$ is $O(n^2m^2)$.

We round the coordinates of each sensor $v \in V$ to the nearest grid point $G(v) \in G$; let $G(V) = \{ G(v) : v \in V \}$ be the set of the rounded sensor coordinates. Then we use a polynomial-time approximation algorithm \cite{arora98ptas,mitchell99guillotine} to find a Steiner tree $S$ for $G(V)$ with Steiner points in $G$ such that the total length $\s{S}$ of the edges is within factor ${1+1/m}$ of the smallest such tree. Without loss of generality, we can assume that $S$ has $O(n)$ edges.

Then we construct a feasible solution $R$ of the two-tier relay placement problem as follows:
\begin{enumerate}[label=(\roman*)]
    \item Replace each edge $\{a,b\}$ of $S$ by a chain of relays with spacing $r$ that connects $a$ to $b$.
    \item Connect each sensor $v \in V$ to the point $G(v)$ by a chain of relays.
\end{enumerate}

Clearly the solution $R$ is feasible and a compact representation of it can be constructed in polynomial time. Let us now analyze the approximation factor. To this end, consider an optimal solution $\optrel$ of the two-tier relay placement problem. Given $\optrel$, we can construct a Steiner tree $S^*$ for $V$ such that there are at most $n-2$ Steiner points. Each Steiner point is a relay of $\optrel$ with degree larger than $2$, and each $v \in V$ is a leaf node of $S^*$; moreover, $\s{S^*} \le n + \s{\optrel} r$.

If we round the coordinates of the vertices of $S^*$ to the nearest grid points in $G$, we obtain a Steiner tree $S_1$ for $G(V)$ with $\s{S_1} \le O(ns) + \s{\optrel} r$. By construction, $S$ is at most $1+1/m$ times larger than $S_1$; hence
\[
    \s{S} \,\le\, O(ns) + (1+1/m) \s{\optrel} r .
\]

In step~(i), we add at most $O(n) + \s{S}/r$ relays in total, and in step~(ii), we add $O(ns/r + n)$ relays in total. Hence,
\[
    \s{R} \,\le\, O(n + ns/r) + \s{S}/r \,\le\, O(ns/r) + (1+1/m) \s{\optrel}.
\]
Now observe that $\s{\optrel} \ge D/r$, as we must connect the pair of sensors that are at distance $\diam(V)$ from each other. Hence,
\[
    ns/r \,=\, D/(mr) \,\le\, (1/m) \s{\optrel},
\]
and we conclude that $\s{R}$ is within factor $1 + O(1/m)$ of $\s{\optrel}$.

\subsection{Dense Instances}

In Section~\ref{sec_lem_rounding} we prove the following lemma showing
that we can focus on a polynomial-size set $G$ of candidate relay
positions.

\begin{lemma}\label{lem_rounding}
For any fixed positive integer $m$ and a given dense instance of the
two-tier relay placement problem, we can construct in polynomial time
a (polynomial-size) set of points $G$ such that there exists a
feasible, ${(1+O(1/m))}$-approximate solution $R_1$ with $R_1
\subseteq G$.
\end{lemma}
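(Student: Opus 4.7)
The plan is to construct $G$ as a grid-based discretization at two scales, and then to show that any optimal solution can be perturbed onto $G$ with multiplicative overhead $1+O(1/m)$.  Let $G_r$ be a square grid of spacing $\Theta(1/m)$ clipped to the union of unit disks around the sensors (a region of area $O(n)$, giving $|G_r|=O(m^2n)$), and let $G_b$ be a square grid of spacing $\Theta(r/m)$ clipped to the bounding box of $V$ enlarged by $r$.  In the dense case the bounding box has side $O(mnr)$, so $|G_b|=O(m^4n^2)$, and hence $|G|=|G_r\cup G_b|$ is polynomial in $n$.  The set $G_r$ will host ``red'' relays (those adjacent to a sensor) and $G_b$ will host ``blue'' relays; the two-scale choice is what makes the whole construction polynomial while still being fine enough to absorb the snapping displacements.

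For the perturbation, I would start from an optimal solution $\optrel$ with tree $T^*$ and decompose $T^*$ into maximal degree-$2$ chains separated by branching relays and sensor-adjacent relays.  Each chain is re-subdivided \emph{uniformly} so that every edge of the new chain has length at most $r(1-c/m)$ for a small constant $c$: a chain containing $k$ relays is replaced by at most $k(1+O(1/m))+O(1)$ relays, so summed over all chains this gives overall overhead $1+O(1/m)$.  The sensor-relay edges are treated similarly, with each red relay shifted along its sensor edge (and possibly one helper relay inserted) so that every such edge has length at most $1-c/m$.  Then I snap every red relay to the nearest point of $G_r$ and every blue relay to the nearest point of $G_b$; the displacements are at most $O(1/m)$ and $O(r/m)$ respectively, and the $c/m$ slack from the re-subdivision absorbs them, so every edge of the snapped tree still satisfies its length bound.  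The resulting $R_1\subseteq G$ is feasible and has size $(1+O(1/m))\s{\optrel}$.

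The principal obstacle is keeping the re-subdivision overhead \emph{multiplicative} rather than additive: subdividing each near-$r$-length edge individually would multiply the relay count of that edge by~$2$, blowing the approximation factor past~$2$, so the re-subdivision must be performed on whole degree-$2$ chains at once to keep the new relay count proportional to the chain length.  This in turn forces a careful handling of the branching and sensor-adjacent relays, whose count could in principle be $\Theta(n)$ while $\s{\optrel}$ is much smaller; I would deal with this by a preliminary case split that solves the small-$\s{\optrel}$ regime by direct enumeration over polynomially many configurations in $G$, and uses the chain argument only when $\s{\optrel}=\Omega(n/m)$ so that the $O(n)$ additive overhead is already within a $1+O(1/m)$ factor.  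A last detail is checking that two consecutively snapped relays along a chain do not combine their displacements in a way that exceeds the slack on the edge between them; this is a routine triangle-inequality calculation but pins down the precise value of~$c$.
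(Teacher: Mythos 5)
There is a genuine gap, and it is in the heart of the argument: the ``create $\Theta(1/m)$ slack on every edge, then snap to a fine grid'' scheme does not yield a $1+O(1/m)$ factor. Re-subdividing a degree-2 chain whose edges are already taut costs, in the worst case, one extra relay \emph{per chain} (a chain of two edges of length exactly $r$ has one internal relay but needs two once the spacing is forced down to $r(1-c/m)$), and the number of chains is controlled by the branching relays and the sensor-adjacent relays of $T^*$, which can be a constant fraction of $\s{\optrel}$ (e.g., every blue relay adjacent to a red one) or even $\Theta(n)\gg\s{\optrel}$. So the overhead is a constant multiplicative factor, not $1+O(1/m)$. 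Your proposed rescue does not close this: if $\s{\optrel}=\Omega(n/m)$ then an $O(n)$ additive overhead is $O(m)\s{\optrel}$, not $O(\s{\optrel}/m)$ (you would need $\s{\optrel}=\Omega(mn)$), and in the complementary regime ``direct enumeration'' over subsets of $G$ of non-constant size is not polynomial. A further unresolved point is the red side: a red relay adjacent to several sensors at distance exactly $1$ in different directions cannot be shifted to create slack on all of its sensor edges simultaneously, and splitting it into helpers again multiplies the red-relay count by a constant.

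The paper's proof is built precisely to avoid needing slack. It removes only $O(\s{\optrel}/m)$ blue edges of $B^*$ (possible because $B^*$ is an MST of maximum degree $5$) to break the tree into subtrees of at most $m$ internal relays, paying one grid-anchored relay per removed edge, i.e., an additive $O(\s{\optrel}/m)$. Each subtree's internal relays are then moved \emph{exactly} onto candidate points by a cable-tautening argument: translate the interior until some edge becomes taut, then slide along the corresponding circle until further constraints bind; the candidate set is an $m$-fold iterated arrangement of circles of radii $1,r,r+1,\dots,mr,mr+1$ around previously pinned points, which by construction contains every location where such a constraint-tight configuration of at most $m$ relays can come to rest. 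No relays are added in this phase and no edge needs to shrink, which is exactly what a uniform grid with snapping cannot provide. If you want to keep a grid-based construction, you would need a different mechanism for charging the extra relays to $\s{\optrel}/m$ rather than to the number of chains; as written, the proposal does not establish the lemma.
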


Lemma~\ref{lem_rounding} implies that, in our quest for a PTAS for our
two-tier relay placement problem, it suffices for us to consider sets
of relays on the grid~$G$.

Our PTAS method is based on the $m$-guillotine framework of
\citet{mitchell99guillotine}.  We review a few definitions.  Let $A$ be
a connected set of line segments with endpoints in $G$. A {\em window}
$W$ is an axis-aligned rectangle whose defining coordinates are $x$- and
$y$-coordinates of the grid points $G$.  A {\em cut} is a horizontal
or vertical line $\ell$, through a point of $G$, that intersects
$\intW$, the interior of window $W$. The intersection, $\ell\cap
(A\cap \intW)$, of a cut $\ell$ with $A\cap \intW$ consists of a
discrete (possibly empty) set of crossing points where an edge from
$A$ crosses~$\ell$. Let the crossing points be denoted by
$p_1,\ldots,p_{\xi}$, in order along~$\ell$.  We define the {\em
  $m$-span}, $\sigma_m(\ell)$, of $\ell$ (with respect to $W$) to be
the empty set if $\xi\leq 2(m-1)$, and to be the (possibly
zero-length) line segment $p_{m}p_{\xi-m+1}$ otherwise.

A cut $\ell$ is {\em $m$-perfect with respect to $W$} if
$\sigma_m(\ell)\subseteq A$.  The edge set $A$ is {\em $m$-guillotine
  with respect to window $W$} if either
\begin{enumerate}
    \item $A\cap \intW=\emptyset$, or
    \item there exists an $m$-perfect cut, $\ell$, with respect to
      $W$, such that $A$ is $m$-guillotine with respect to windows
      $W\cap H^+$ and $W\cap H^-$, where $H^+$, $H^-$ are the closed
      halfplanes defined by~$\ell$.
\end{enumerate}
We say that $A$ is {\em $m$-guillotine} if $A$ is $m$-guillotine with
respect to the axis-aligned bounding box of~$A$.

\begin{figure}\centering
\includegraphics[page=2]{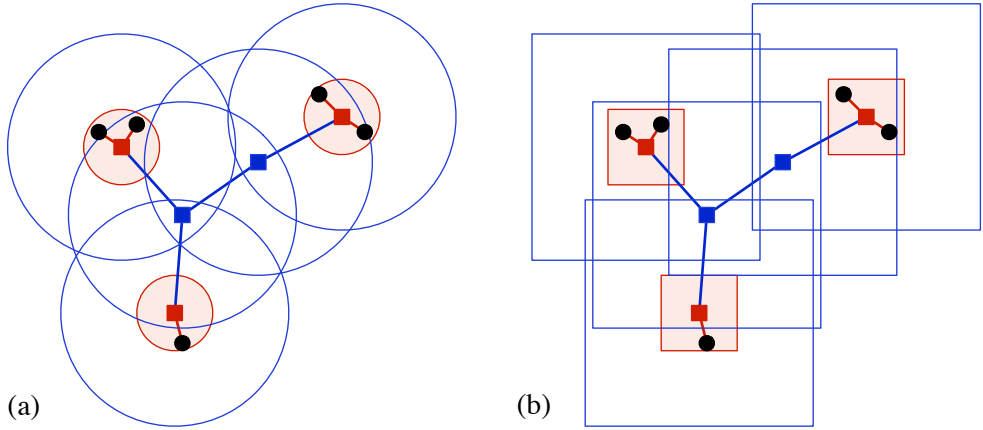}
\caption{Optimal solution $R_G^*$ and tree $T^*$. (a)~Red and blue disks. (b)~Red and blue squares.}\label{fig:redblue}
\end{figure}

Let $R_G^*$ denote an optimal solution to the two-tier relay placement
problem restricted to points of the grid $G$.  Associated with $R_G^*$
is a tree $T^*$ whose internal nodes are the set $R_G^*$ and whose
leaves are the $n$ input points (sensors) $V$, with every edge of
$T^*$ between two relays having length at most $r$ and every edge of
$T^*$ between a relay and a leaf (sensor) having length at most~1; see Figure~\ref{fig:redblue}.
Some edges of $T^*$ are blue (those of length at most $r$ between two
relays), and some edges are red (those of length at most 1 between a
red relay and a sensor).  Relays that are incident on red edges are
{\em red} and all other relays are {\em blue}.  We place {\em red
  disks} of radius 1 centered at each red relay and place {\em blue
  disks} of radius $r$ centered at {\em every} relay (blue or red).
Each red disk (resp., blue disk) has an associated bounding box, which
we call a {\em red square} (resp., {\em blue square}).  We observe
that each blue relay has constant degree in $T^*$ (in fact, at most 5,
based on the degree bound of Euclidean minimum spanning trees in the
plane), and no point in the plane lies in more than a constant number
of blue squares (for the same basic reasons as the degree bound --
otherwise, the set of relays could be reduced, while maintaining
communication connectivity).  Further, we observe that the union of
the edges bounding all blue squares is connected (since the blue edges
of $T^*$ form a subtree of $T^*$, and any two relays joined by a blue
edge must have their corresponding blue squares intersecting).

The $m$-guillotine method is typically applied to a set of {\em edges}
of some network.  We now define a related concept, that of a set of
relays being ``$m$-guillotine''.  Let $R\subseteq G$ be a set of
relays that is feasible for $V$, meaning that there is an associated
tree $T$ with leaves at $V$, blue edges of lengths at most $r$ and red
edges of lengths at most 1.  Let $E_{\blue}$ be the set of (axis-parallel)
edges bounding the blue squares associated with $R$, and let $E_{\red}$ be
the set of (axis-parallel) edges bounding the red squares associated
with $R$.  Consider a window $W$ (on the grid induced by the points
$G$) and a horizontal/vertical cut $\ell$ intersecting $\intW$.

The intersection, $\ell\cap (E_{\blue}\cap \intW)$, of a cut $\ell$ with
$E_{\blue}\cap \intW$ consists of a discrete (possibly empty) set of
crossing points where an edge from $E_{\blue}$ crosses~$\ell$.
Without loss of generality, consider a vertical cut $\ell$,
and let $ab=\ell\cap W$ denote its intersection with the window~$W$.
We now define the {\em blue $m$-span},
$\sigma_m^{\blue}(\ell)$, of $\ell$ (with respect to $W$), as follows.
As we walk along $ab$ from $a$ towards $b$, we enter various blue squares;
let $p_m$ be the $m$th entry point, if it exists; if we enter fewer than $m$
blue squares, then we define the blue $m$-span to be empty.
(Note that the point $a$ may lie within more than one blue square;
however, we can always assume that no point lies within more than a constant number of blue squares.)
Similarly, as we walk along $ab$ from $b$ towards $a$, we enter various blue squares;
let $q_m$ be the $m$th entry point, if it exists; if we enter fewer than $m$
blue squares, then we define the blue $m$-span to be empty.
Then, if $p_m$ is closer to $a$ than $q_m$, and the length of $p_mq_m$ is at least $2r$ (implying that
$p_mq_m$ fully crosses at least one blue square, entering and exiting it), we define the blue $m$-span,
$\sigma_m^{\blue}(\ell)$, of $\ell$ (with respect to $W$), to be the segment $p_mq_m$, a subsegment of $\ell$; otherwise,
we define the blue $m$-span to be empty. See Figure~\ref{fig:bluespan} for an illustration.
Note that, if nonempty, by definition the blue $m$-span has length at least $2r$, the side length of a blue square;
further, if the blue $m$-span is empty, then $ab$ intersects $O(m)$ blue squares.
We similarly
define the {\em red $m$-span}, $\sigma_m^{\red}(\ell)$, of $\ell$ (with
respect to $W$) based on the entry points along $\ell$ of red squares.
Note that, if nonempty, by definition the red $m$-span has length at least 1;
further, if the red $m$-span is empty, then $ab$ intersects $O(m)$ red squares.

\begin{figure}\centering
\includegraphics[page=3]{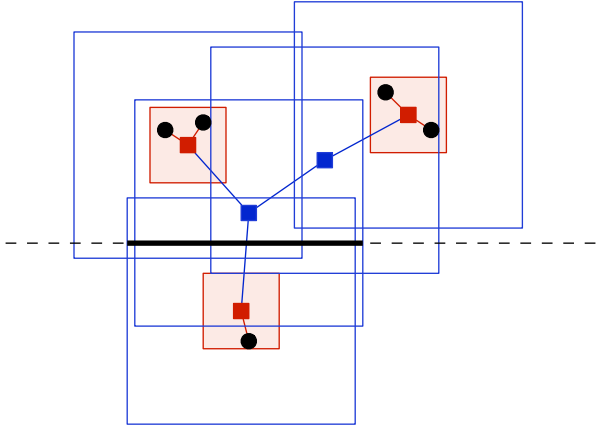}
\caption{A horizontal cut $\ell$ (dashed line) and the blue $m$-span $\sigma_m^{\blue}(\ell)$ for $m = 2$ (thick line).}\label{fig:bluespan}
\end{figure}

We say that a nonempty blue
$m$-span $\sigma_m^{\blue}(\ell)$ is {\em relay-dense}
if $R$ includes relays at
all points of $G_0^{\blue}$ that are corners of ($r/2$)-by-($r/2$) grid cells intersected
by  $\sigma_m^{\blue}(\ell)$; the corresponding relays are called a {\em relay-dense bridge}; see Figure~\ref{fig:densebridge}.  ($G$ includes the points $G_0^{\blue}$ of the regular square grid of spacing
$r/2$ that covers the bounding rectangle of $V$; see
Section~\ref{sec_lem_rounding}.)
Similarly, we say that a nonempty red $m$-span is {\em relay-dense} if
$R$ includes relays
at all points of $G_0^{\red}$ that are corners of (1/2)-by-(1/2) grid cells that contain points of $V$
(which implies that it includes at least one (red) relay within each disk of radius 1,
centered at $V$, that is intersected by $\sigma_m^{\red}(\ell)$);
the corresponding red relays are called a {\em (red) relay-dense bridge}.
($G$ includes the points $G_0^{\red}$ at the corners of the cells of the regular square grid of spacing 1/2 that contain points $V$;
see Section~\ref{sec_lem_rounding}.)

\begin{figure}\centering
\includegraphics[page=4]{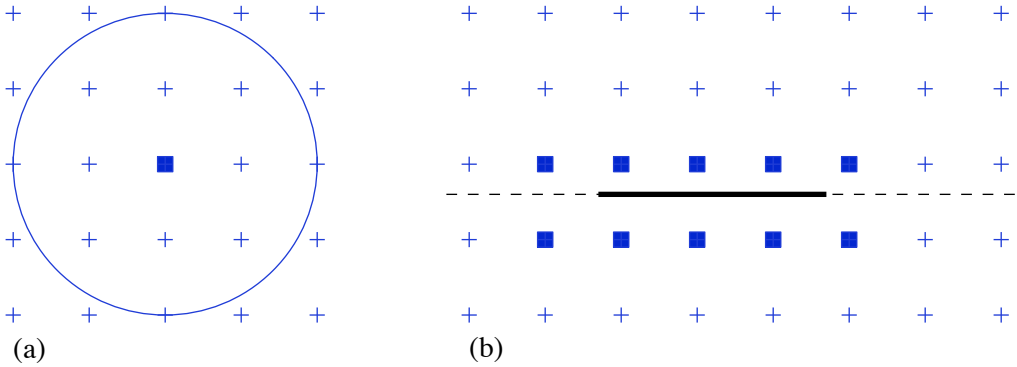}
\caption{(a)~Regular square grid $G_0^{\blue}$ of spacing $r/2$. (b)~A relay-dense $m$-span $\sigma_m^{\blue}(\ell)$.}\label{fig:densebridge}
\end{figure}

A cut $\ell$ is {\em $m$-perfect with respect to $R$} if
the nonempty $m$-spans (red and blue) along $\ell$ are relay-dense.
The set $R$ of relays is {\em $m$-guillotine
  with respect to window $W$} if either
\begin{enumerate}
    \item there are no blue squares of $R$ interior to $W$, or
    \item there exists an $m$-perfect cut, $\ell$, with respect to
      $W$, such that $R$ is $m$-guillotine with respect to windows
      $W\cap H^+$ and $W\cap H^-$, where $H^+$, $H^-$ are the closed
      halfplanes defined by~$\ell$.
\end{enumerate}
We say that $R$ is {\em $m$-guillotine} if $R$ is $m$-guillotine with
respect to the axis-aligned bounding box of~$R$.

\begin{lemma}\label{lem_guillotine_relays}
Let $m$ be a fixed positive integer, and let $V$ be a given dense
instance of the two-tier relay placement problem.  Then, for any set
$R'$ of relays that is feasible for the instance, there exists an
$m$-guillotine set $R$ of relays, also feasible for the instance, with
$\s{R}\leq (1+O(1/m))\s{R'}$.
\end{lemma}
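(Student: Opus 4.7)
The plan is to adapt the $m$-guillotine framework of \citet{mitchell99guillotine} to the two-color square arrangement induced by $R'$.  Starting with $R := R'$ and $W$ the axis-aligned bounding box of the blue squares of $R'$, I would recursively apply the following procedure: if no blue square of the current $R$ meets $\intW$, stop; otherwise examine every horizontal and every vertical cut $\ell$ through grid points of $G$ crossing $\intW$, compute the number $\phi(\ell)$ of relays of $G_0^{\blue}$ and $G_0^{\red}$ needed to turn every nonempty blue and red $m$-span of $\ell$ into a relay-dense bridge, pick the cut $\ell^{*}$ minimizing $\phi(\ell)$, add the required relays to $R$, and recurse on the two subwindows $W\cap H^{+}$ and $W\cap H^{-}$.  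The final $R$ is $m$-guillotine by construction.  Feasibility is preserved because each new relay-dense bridge lies along an $m$-span whose two endpoints fall inside pre-existing blue squares of $R$ (by the very definition of a nonempty $m$-span), so every added relay is within distance $r$ of some relay already in $R$; thus a tree certifying feasibility before the step extends to a tree certifying feasibility after the step by attaching each new relay through its nearest already-placed relay.

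The heart of the argument is to bound $|R\setminus R'|$ by $O(|R'|/m)$ via an integrated-charging scheme in the spirit of \cite{mitchell99guillotine}.  For a fixed window $W$ in the recursion, I would establish
\[
V(W)+H(W) \,=\, \sum_{\text{vertical }\ell}\phi(\ell) \,+\, \sum_{\text{horizontal }\ell}\phi(\ell) \,\leq\, \frac{c}{m}\,\mu(R';W),
\]
where $\mu(R';W)$ is a weighted measure of blue and red squares of $R'$ meeting $\intW$.  The estimate follows by noting that a blue square of side $2r$ (resp.\ a red square of side $1$) is crossed by $O(r)$ vertical and $O(r)$ horizontal cuts of $G$ (resp.\ $O(1)$ of each), and that only cuts with at least $m$ further same-color squares on each side contribute to any nonempty $m$-span; a standard ``$m$-margin'' accounting then shows each square contributes at most a $1/m$-fraction of its own weight to the integrated sum.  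Choosing the cheaper of the best horizontal and best vertical cut at each recursion node and telescoping the per-level cost down the recursion tree yields $|R\setminus R'|=O(|R'|/m)$, which is the promised $(1+O(1/m))\,|R'|$ bound.

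The main obstacle I anticipate is the two-scale nature of the bridges: blue bridges place one relay per $r/2$ units along a span of length at least $2r$, while red bridges place one relay per $1/2$ unit along a span of length at least $1$.  These must be amortized in a single charging scheme without double counting, and the finer-scale red contribution must be charged separately to the red relays of $R'$ rather than to blue relays.  A secondary subtlety is making sure that all bridge endpoints actually lie in $G$; this is automatic from the construction of $G$ in Section~\ref{sec_lem_rounding}, which is designed to contain both $G_0^{\blue}$ and $G_0^{\red}$.  Finally, one has to handle cleanly the degenerate cases in which an $m$-span nearly coincides with an existing chain of $R'$, so that a bridge is merged with pre-existing connectivity rather than forcing additional glue relays; these cases can only decrease $\phi(\ell)$ and are thus absorbed in the bound above.
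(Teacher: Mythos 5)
Your overall strategy is the same as the paper's: pass to the axis-parallel edge sets $E_{\blue}$ and $E_{\red}$ bounding the blue and red squares of $R'$, run the $m$-guillotine augmentation of \citet{mitchell99guillotine} on this two-colour arrangement with blue lengths weighted by $1/r$ so that bridge lengths translate into relay counts, and then convert each blue (resp.\ red) bridge of length $\lambda$ into $O(\ceil{\lambda/r})$ (resp.\ $O(\ceil{\lambda})$) relays, using the fact that nonempty spans have length at least $2r$ (resp.\ $1$) to keep the ceilings harmless. You correctly isolate the two-scale amortization issue and the need to charge red to red and blue (scaled) to blue.

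The gap is in the central charging inequality. As written, $\sum_{\ell}\phi(\ell)\le (c/m)\,\mu(R';W)$ summed over \emph{all} cuts through grid points of $G$ is false: $G$ contains the iterated circle arrangement $C_m$ and can be far denser than spacing $r/2$, so a single column of $2m+2$ stacked blue squares is crossed by polynomially many vertical cuts, each contributing $\Theta(m)$ to the left-hand side while the right-hand side stays $O(1)$. Even under the charitable reading in which the sum is an integral over cut positions, the conclusion does not follow by ``choosing the cheapest cut and telescoping'': $\mu(R';W)$ is not consumed by the recursion --- the same squares reappear in descendant windows at every level, and the recursion tree is not of bounded depth --- so per-window bounds of the form $(c/m)\mu(W)$ do not sum to $O(\s{R'}/m)$. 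The missing idea, which is what the paper actually proves, is an averaging argument comparing the weighted region areas $\int_x [f_{\red}(x)+(1/r)f_{\blue}(x)]\,dx$ and $\int_y [g_{\red}(y)+(1/r)g_{\blue}(y)]\,dy$: in every window there exists a cut whose weighted span length is at most its own \emph{chargeable length}, namely the cross-section along that cut of the red/blue regions, which is a piece of square boundary lying \emph{on the cut} and hence never charged again in any descendant subproblem. It is the disjointness of these charges across the entire recursion that bounds the total added bridge length by $O(1/m)$ times the total weighted perimeter, i.e.\ the number of added relays by $O(\s{R'}/m)$. A secondary imprecision: the endpoints of a nonempty blue $m$-span are entry points into blue squares, hence at $L_\infty$-distance $r$ (Euclidean distance up to $r\sqrt{2}$) from the corresponding relays, so your claim that every added relay is within distance $r$ of an existing one does not follow from ``the span endpoints lie in pre-existing blue squares''; one should instead use that a nonempty blue span fully crosses some blue square, and note that the red bridges serve coverage and subproblem separation in the dynamic program rather than connectivity.
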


\begin{proof}
We argue that we can add a small number, at most $O(1/m)\cdot\s{R'}$,
of relays to $R'$ to make it $m$-guillotine, if it is not already.

First, consider the set of blue and red squares associated with $R'$;
these are the bounding boxes of the disks of radius $r$ centered at
all relays and disks of radius 1 centered at red relays.  Let $E_{\blue}$
and $E_{\red}$ be the (axis-parallel) edges bounding the blue and red
squares, respectively.

By the standard $m$-guillotine arguments \cite{mitchell99guillotine},
we know that $E_{\blue}$ can be augmented by ``blue'' bridges ($m$-spans) of
length totalling $O(1/m)\cdot \s{E_{\blue}}$, so that the resulting set of
edges is $m$-guillotine (in the usual sense).  A similar statement
holds for the set $E_{\red}$ of red edges, which can be augmented by
``red'' bridges to make the set $m$-guillotine if it is not already
$m$-guillotine.  In fact, by a slight modification of the usual
argument, we can claim that we can augment $E_{\blue}\cup E_{\red}$ by a set of
red and blue bridges so that the resulting set is collectively
$m$-guillotine, with a common recursive partitioning by cuts: For each
cut $\ell$ in the recursive decomposition, we can choose $\ell$
(horizontal or vertical, in an appropriate position) so that the red
$m$-span (with respect to $E_{\red}$) and the blue $m$-span (with respect
to $E_{\blue}$) can each be charged off to the red and blue lengths of the
input, charging the input with a factor of $O(1/m)$.  In more detail,
we define the cost of a vertical cut at position $x$ to be
$f_{\red}(x)+(1/r)f_{\blue}(x)$, where $f_{\red}(x)$ is the length of the $m$-span at
position $x$ of the red edges $E_{\red}$, and $f_{\blue}(x)$ is the length of the
$m$-span at position $x$ of the blue edges $E_{\blue}$.  The union of
$m$-spans over all $x$ defines a ``red region'' and a ``blue region''
with respect to vertical cuts, and the integral, $\int_x
[f_{\red}(x)+(1/r)f_{\blue}(x)] dx$ gives the weighted sum of the areas of these
regions, weighting the length of the blue $m$-span by $1/r$ (this is
to reflect the fact that the blue squares are $r$ times as large as
the red squares, and our goal is to count the cardinality of squares).
Similarly, we can define red and blue regions with respect to {\em
  horizontal cuts}, and functions, $g_{\red}(y)$ and $g_{\blue}(y)$, that define
the cost of such cuts.

The (weighted) areas of the red/blue regions with respect to
horizontal cuts are given by the integrals $\int_y g_{\red}(y) dy$ and
$\int_y (1/r)g_{\blue}(y) dy$; equivalently they are given by the integrals,
$\int_y h_{\red}(x) dx$ and $\int_x (1/r)h_{\blue}(x) dx$, of the lengths
$h_{\red}(x)$ and $h_{\blue}(x)$ of the vertical sections of the regions.  These
lengths ($h_{\red}(x)$ and $h_{\blue}(x)$) represent the ``chargeable length'' of
vertical cuts at position $x$.  Then, assuming that the weighted sum
of the areas of the red and blue regions with respect to horizontal
cuts is greater than that with respect to vertical cuts, we get that
there must be a vertical cut at some position $x^*$ where
$f_{\red}(x^*)+(1/r)f_{\blue}(x^*) \leq h_{\red}(x^*) + (1/r)h_{\blue}(x^*)$.  A vertical
cut through $x^*$ can then have its red $m$-span charged to $O(1/m)$
of the length of edges bounding red squares (i.e., $O(1/m)$ times the
number of red squares) and its blue $m$-span charged to $O(1/mr)$
times the length of edges bounding blue squares (i.e., $O(1/m)$ times
the number of blue squares/relays).

Next, consider each blue bridge ($m$-span) that the above argument
shows we can afford to add (charging $O(1/m)$ times the number of
relays), to make $E_{\blue}$ $m$-guillotine (as a network).  We can convert
any such blue bridge of length $\lambda$ into $O(\ceil{\lambda/r})$
relays on or near the corresponding cut $\ell$, in order to make the
cut relay-dense (i.e., to include relays at all points of $G_0^{\blue}$ that
are corners of ($r/2$)-by-($r/2$) grid cells intersected by the blue
bridge).
Similarly, each red bridge ($m$-span), of length $\lambda$, that is added in order to make
$E_{\red}$ $m$-guillotine (as a network) can be converted into $O(\ceil{\lambda})$
red relays that stab all disks of radius 1, centered at points of $V$, that are
intersected by the red bridge.
Similarly, we say that a nonempty red $m$-span is {\em relay-dense} if
$R$ includes at least one (red) relay from the set $G$ within each disk of radius 1,
centered at $V$, that is intersected by $\sigma_m^{\red}(\ell)$.
The purpose of relay-dense bridges of red relays is to ensure that
sensors within distance 1 of the boundary of $W$ are covered by the unit disks
of the red relay-dense bridge, if they are not covered by
the unit disks associated with the $O(m)$ unbridged red relays.  This allows for ``separation''
between subproblems in the dynamic program.

Finally, we conclude that the total number of relays added is $O(1/m)$
times $\s{R}$.  If we let the lengths of blue bridges be
$\lambda_i^{\blue}$ and let the lengths of red bridges be
$\lambda_j^{\red}$, then we know, from the charging scheme for making
$E_{\blue}$ and $E_{\red}$ $m$-guillotine, that $(1/r)\sum_i \lambda_i^{\blue} +
\sum_j \lambda_j^{\red}$ is at most $O(1/mr)$ times the total
perimeters of all blue squares plus $O(1/m)$ times the total
perimeters of all red squares.  This implies that $(1/r)\sum_i
\lambda_i^{\blue} + \sum_j \lambda_j^{\red}$ is at most $O(1/m)$ times
the number of relays of $R$.  By construction, we know that if
$\lambda_i^{\blue}$ is nonzero, then $\lambda_i^{\blue}\geq r$, and that
if $\lambda_j^{\red}$ is nonzero, then $\lambda_j^{\red}\geq 1$.  Thus,
the total number of relays added, $O(\sum_i \ceil{\lambda_i^{\blue}/r} +
\sum_j \ceil{\lambda_j^{\red}})$ is at most $O(1/m)$ times the number
of relays of $R$.
\end{proof}

Applying Lemma~\ref{lem_guillotine_relays} and Lemma~\ref{lem_rounding} yields the following

\begin{corollary}
Let $m$ be a fixed positive integer, and let $V$ be a given dense instance of
the two-tier relay placement problem.
Then, there exists an $m$-guillotine
set $R\subseteq G$ of relays with $\s{R}\leq (1+O(1/m))\s{R^*}$.
\end{corollary}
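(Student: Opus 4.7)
The plan is to chain the two lemmas preceding the corollary and then verify that the resulting relay set still lies in the grid~$G$. First, starting from an (unrestricted) optimal solution $R^*$ to the two-tier relay placement problem on the dense instance $V$, I would invoke Lemma~\ref{lem_rounding} to snap the relays of $R^*$ onto the candidate set $G$, obtaining a feasible solution $R_1 \subseteq G$ with $\s{R_1} \leq (1+O(1/m))\s{R^*}$. This gives a grid-restricted near-optimal solution, but it need not satisfy the $m$-guillotine property.

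Next I would apply Lemma~\ref{lem_guillotine_relays} to $R_1$ itself (the lemma is stated for any feasible relay set), obtaining an $m$-guillotine feasible relay set $R$ with $\s{R} \leq (1+O(1/m))\s{R_1}$. Composing the two multiplicative blow-ups yields
\[
    \s{R} \,\leq\, (1+O(1/m))^2 \s{R^*} \,=\, (1+O(1/m)) \s{R^*},
\]
which gives the desired cardinality bound.

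The only genuine issue — and the step that deserves explicit attention — is that the corollary insists $R \subseteq G$, whereas Lemma~\ref{lem_guillotine_relays} does not itself guarantee this membership. I would therefore point out that the augmentation in the proof of Lemma~\ref{lem_guillotine_relays} adds relays only at corners of the $(r/2)$-by-$(r/2)$ grid $G_0^{\blue}$ (for blue relay-dense bridges) and at corners of the $(1/2)$-by-$(1/2)$ grid $G_0^{\red}$ (for red relay-dense bridges). Since Lemma~\ref{lem_rounding} ensures that $G_0^{\blue} \cup G_0^{\red} \subseteq G$, and since the starting set $R_1$ is already contained in $G$, every relay of $R$ lies in $G$, as required.

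The hard part is not the composition itself — that is purely bookkeeping — but rather verifying that the guillotine augmentation is compatible with the discretisation of Lemma~\ref{lem_rounding}, so that the two approximations can be applied simultaneously to the same relay set without forcing new off-grid locations. Once this compatibility is observed, the corollary follows immediately from multiplying the two factors of $1+O(1/m)$ and absorbing the product into a single $1+O(1/m)$.
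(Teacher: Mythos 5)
Your proposal is correct and matches the paper's own (one-line) argument: the paper simply states that the corollary follows by applying Lemma~\ref{lem_rounding} and then Lemma~\ref{lem_guillotine_relays} and composing the two $(1+O(1/m))$ factors. Your additional observation that the guillotine augmentation only places relays at points of $G_0^{\blue}\cup G_0^{\red}\subseteq G$, so that membership in $G$ is preserved, is a detail the paper leaves implicit but is exactly the right thing to check.
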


We now describe the algorithm that yields the claimed PTAS for
computing $R^*$ approximately.  The algorithm is based on a dynamic
program to compute a minimum-cardinality $m$-guillotine set, $R$, of
relays that obeys certain connectivity and coverage constraints.  A
subproblem is specified by a window $W$ (with coordinates from the
grid $G$), together with a constant (depending on $m$) amount of boundary information:
\begin{enumerate}
\item A set of $O(m)$ (unbridged) blue relays (from $G$) whose
  blue squares intersect the boundary of $W$, and a set of $O(m)$ red
  relays (from $G$) whose red squares intersect the boundary of~$W$.
\item The blue $m$-spans, which determine the set of relays
  forming the relay-dense bridge; in order to describe the relay-dense
  bridge, we have only to specify the endpoints of the blue $m$-span,
  as this gives a succinct encoding of the relays that form the
  relay-dense bridge.  There are up to 4 blue $m$-spans, one per side
  of $W$.
\item The red $m$-spans, which determine the set of red relays
  forming the relay-dense bridge; in order to describe the relay-dense
  bridge, we have only to specify the endpoints of the red $m$-span,
  as this gives a succinct encoding of the relays that form the
  relay-dense bridge.
\item Connectivity requirements among the relays specified along
  the boundary.  Specifically, there are $O(m)$ entities specified
  along the boundary of $W$: $O(m)$ unbridged relays per side of
  $W$, and up to two $m$-spans per side of $W$, which encode the
  relay-dense bridges.  The connectivity requirements are specified as
  a collection of (disjoint) subsets of these $O(m)$ entities, with
  the understanding that each such subset of entities must be
  connected within $W$ by edges of the communication graph that
  connects two relays if and only if they are at distance at most $r$
  from each other.
\end{enumerate}
We require that the solution to a subproblem gives a set of red relays
whose unit disks cover all sensors interior to $W$ that are not
covered by the relays specified along the boundary (either the $O(m)$
unbridged red relays or the relay-dense bridges of red relays),
together with a set of blue relays within the subproblem that enable
connectivity.  The dynamic program optimizes over all choices of cuts
$\ell$ (horizontal or vertical) of $W$, and all choices of boundary
information along the cut $\ell$ that are compatible with the boundary
information for $W$.  Since, for fixed $m$, there are a polynomial
number of different subproblems, and a polynomial number of choices of
cuts and boundary information within the dynamic programming
recursion, we have completed the proof of the main result, Theorem~\ref{thm:ptas}.

\subsection{Proof of Lemma~\ref{lem_rounding}}\label{sec_lem_rounding}

\paragraph{\boldmath Construction of $G$.}

Let $G_0^{\blue}$ be the regular square grid of spacing $r/2$ that
covers the bounding rectangle of $V$; the points of $G_0^{\blue}$ are
the corners of ($r/2$)-by-($r/2$) squares that form a regular grid.
By construction, the grid $G_0^{\blue}$ has ${O((D/r)^2 + 1)} = O(n^2 m^2)$ points.

Consider now the regular square grid of spacing $1/2$ that covers the
bounding rectangle of~$V$.  Let $G_0^{\red}$ be the subset of these
grid points that are within distance 1 of one of the $n$ input points~$V$.
Then, $G_0^{\red}$ are the corners of ($1/2$)-by-($1/2$) squares
of a regular grid, and there are only $O(n)$ such corner points.

We construct an \emph{iterated circle arrangement} $C_0, C_1, \dotsc, C_m$ recursively as follows. As the base case, let $C_0 = G_0^{\blue}\cup V$. Then, for each $i = 1, 2, \dotsc, m$, the arrangement $C_i$ is constructed as follows: For each point $p \in C_{i-1}$, add $p$ to $C_i$, as well as the \emph{special points} located $1$ unit below $p$ and $r$ units below $p$. Then draw circles of radii $1,r,r+1,2r,2r+1,3r,3r+1,\dotsc,mr,mr+1$ centered at each point of $C_{i-1}$. Finally, add the vertices (crossing points) of this arrangement to~$C_i$.

Finally, let $G = C_m\cup G_0^{\red}$. The size of $G$ is polynomial in $n$, for a constant $m$.

\paragraph{\boldmath Construction of $R_1$.}

Consider an optimal solution $\optrel$ and the minimum-length communication tree $T^*$ associated with it; let $B^*$ be the blue subtree of $T^*$. Since $B^*$ is an MST of $\optrel$, the maximum degree of any relay in $B^*$ is $5$. Hence there is a set $E'$ of $O(\s{R^*}/m)$ blue edges whose removal breaks $B^*$ into subtrees, each having at most $m$ relays.

We construct a new solution $R_1$ that conforms to $G$ as follows; see Figure~\ref{fig:snapdense} for an illustration. First, for each edge $\{x,y\} \in E'$, we add a new relay $z$ at a point of $G_0^{\blue}$ that is within distance $r$ from $x$ and $y$; then we replace the edge $\{x,y\}$ by two edges, $\{x,z\}$ and $\{z,y\}$. The new blue edge $\{x,z\}$ becomes part of the same subtree as $x$, and the edge $\{z,y\}$ becomes part of the same subtree as $y$.

\begin{figure}
    \centering
    \scalebox{1.0}{\input{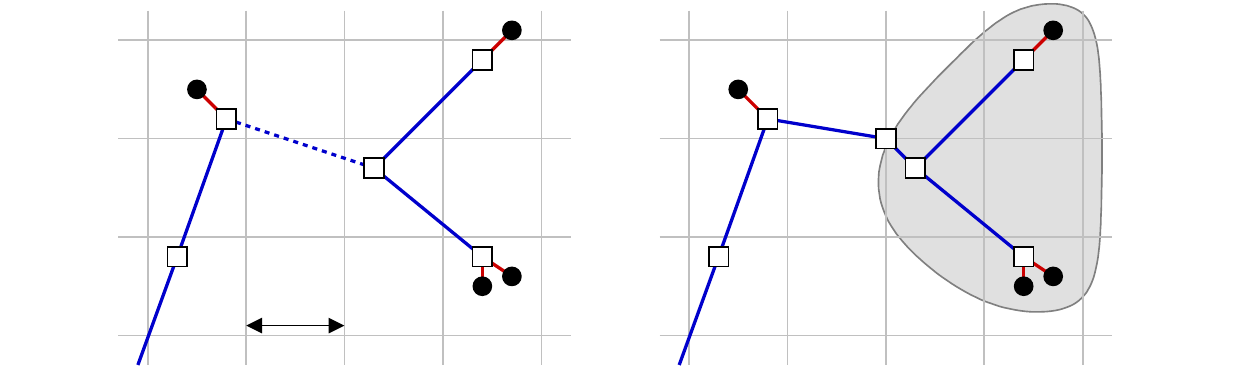_t}}
    \caption{Construction of $R_1$ for dense instances. Each edge in the set $E'$ is replaced by a path of length $2$. The grey area shows one of the subtrees. The leaf nodes of the subtree are now pinned at points of $C_0$, and there are at most $m$ internal nodes. Figure~\ref{fig:pinsubtree} shows how we can ``pin'' the internal nodes of the subtree to points of $G$.}\label{fig:snapdense}
\end{figure}

So far we have added $O(\s{R^*}/m)$ new relays, and we have partitioned the communication tree into edge-disjoint subtrees. The leaf nodes of the subtrees are sensors or newly added relays; both of them are located at points of $C_0$. There are at most $m$ internal nodes in each subtree. To prove Lemma~\ref{lem_rounding}, it is sufficient to show that we can move the internal relays so that they are located at points of $G$, and while doing so we do not break any sensor--relay or relay--relay connections.

We say that a relay is \emph{pinned} if it is located at a point of $G$. We show how to pin all relays in $m$ iterations. In the beginning, we have partitioned the communication tree into edge-disjoint subtrees where the leaf nodes are pinned at $C_0$ and there are at most $m$ internal nodes in each subtree. In what follows, we will show that we can move the internal nodes and refine the partition so that after iteration $i$, we have partitioned the communication tree into edge-disjoint subtrees where the leaf nodes are pinned at $C_i$ and there are at most $m-i$ internal nodes in each subtree. Hence after $m$ iterations, we have pinned all nodes at $C_m = G$.

Observe that it is sufficient to show that if $T$ is a subtree where leaf nodes are pinned at $C_i$, then we can move the internal nodes so that at least one node $x$ becomes pinned at $C_{i+1}$. Then we can partition the subtree $T$ into smaller, edge-disjoint subtrees that share the leaf node $x$. Each of the new subtrees has all leaf nodes pinned at $C_{i+1}$ and they also have fewer internal nodes than $T$.

Consider a subtree $T$. Start translating (in any direction) the internal nodes of $T$, while keeping the leaf nodes in place. This translation will cause some edge $e$ linking a leaf node $p$ to an internal node $u$ to reach its maximum length ($r$ for blue edges or $1$ for red edges). At this moment, the point $u$ lies on a circle $c$ of radius $r$ or $1$ centered at~$p$; see Figure~\ref{fig:pinsubtree} for an illustration.

\begin{figure}[b!]
    \centering
    \scalebox{1.0}{\input{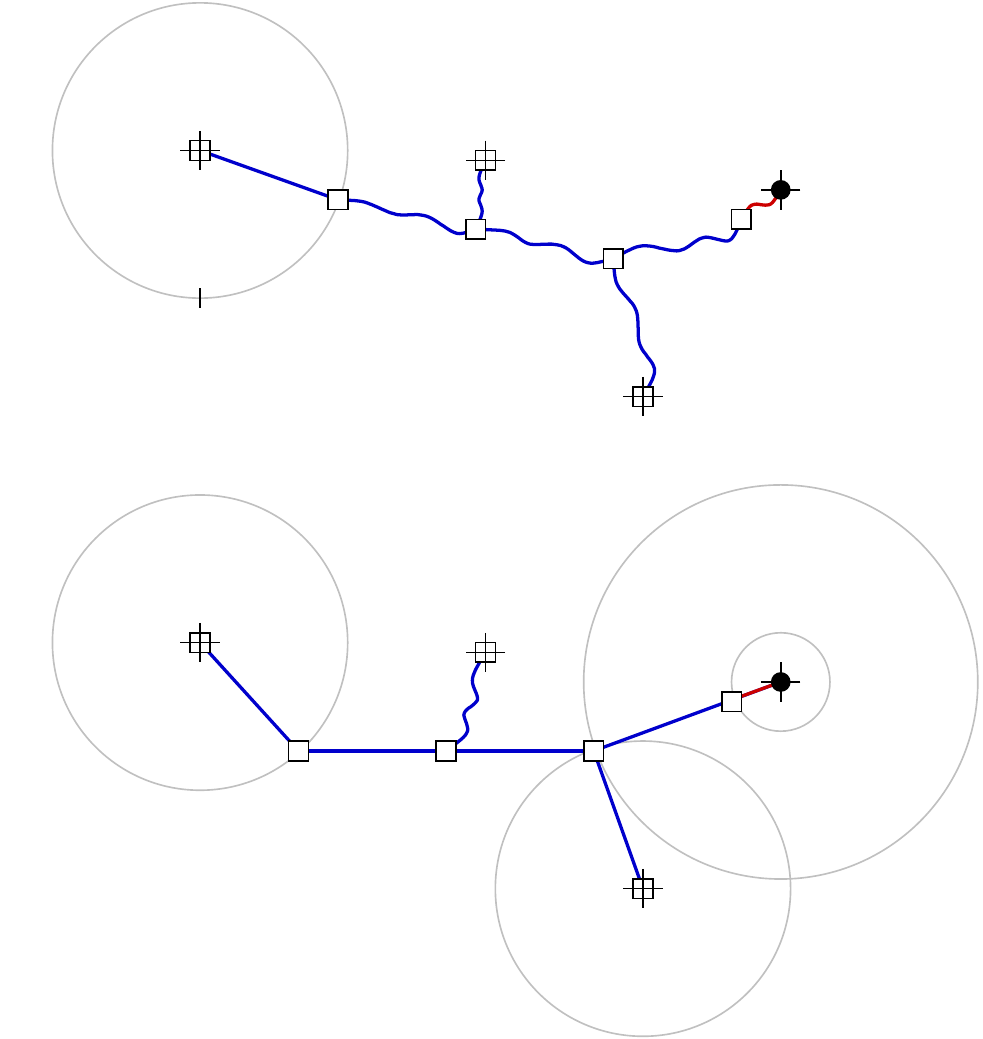_t}}
    \caption{Pinning a subtree. Black dots are sensors and white boxes are relays. The leaf nodes are pinned at points of $C_i$. Straight lines denote ``taut'' edges, i.e., red edges of length $1$ or blue edges of length $r$; other edges are shown with winding lines. (a)~We have translated the internal nodes until the edge between a leaf node $p$ and an internal node $u$ becomes taut. Then we try to slide $u$ along circle $c$ towards the special point. If we succeed, we have pinned $u$ at the special point, which is in $C_{i+1}$. (b)~In this case we cannot slide $u$ to the special point. However, from the subtree of taut edges, we can find a node $w$ that is now located at a point of $C_{i+1}$ -- in this case at the intersection of a circle of radius $r+1$ centered at a sensor, and a circle of radius $r$ centered at a pinned relay.}\label{fig:pinsubtree}
\end{figure}

Now slide the point $u$ along circle $c$ until some other constraint becomes binding. We now do not consider $T$ to be a rigid structure; rather, we allow the internal nodes to move arbitrarily, subject to the upper bound constraints on each edge. The edges serve as ``cables'', which readily shorten, but cannot extend beyond their maximum length.

If we can slide $u$ along $c$ to the special point located below $p$, then we have pinned $u$ at a point of $C_{i+1}$. Otherwise some edges of $T$ reach their upper bound of length while we slide $u$ along $c$; i.e., one or more cables becomes taut. At this moment, some subset of the edges of $T$ are taut (at their length upper bound); this subset of edges forms a tree, $T'$. There are two cases:
\begin{enumerate}[label=(\roman*)]
    \item $T'$ is a path. Then $T'$ is a path linking $u$ to a leaf node $p'$ of $T$. Further, $T'$ is a \emph{straight} path of edges, each at its upper length bound; in fact, all edges of $T'$ will be of length $r$, the upper bound for relay--relay edges, except, possibly, the edge incident on $p'$ (if $p'$ is a sensor). Thus, in this position, $u$ lies at the intersection of circle $c$ with a circle centered at a point of $P$ of radius $jr$ or $jr+1$, for some integer~$j \le m$. As we had $p' \in C_i$, we will have now $u \in C_{i+1}$.
    \item $T'$ is not a path. Then, $T'$ has some node, $w$ (possibly equal to $u$) that is connected to leaves of $T$ by two (or more) straight paths, each of length $jr$ or $jr+1$, for some integer $j \le m$; see Figure~\ref{fig:pinsubtree}b. This implies that $w$ lies at a point of $C_{i+1}$.
\end{enumerate}

In both cases we have moved the internal nodes of $T$ so that one of them becomes pinned at $C_{i+1}$; moreover, none of the communication links are broken, leaf nodes are held in place, and no new relays are added.

\end{document}